\renewcommand{\thefootnote}{{\@fnsymbol\c@footnote}}
\definecolor{border_green_encoding}{HTML}{3C552D}
\definecolor{green_encoding}{HTML}{B8E986}
\definecolor{border_blue_ansatz}{HTML}{154239}
\definecolor{blue_ansatz}{HTML}{7FBFB2}
\definecolor{blue1}{HTML}{548ea1}
\definecolor{blue2}{HTML}{7f88bf}
\definecolor{blue3}{HTML}{7fbfb2}
\definecolor{grayplot}{HTML}{d6d6d6}
\newtheorem{definition}{Definition}
\definecolor{customblue}{HTML}{17178B}
\newcommand{\Id}{\mathds{1}}
\newcommand{\lpr}{\left(}
\newcommand{\rpr}{\right)}
\newcommand{\lbr}{\left[}
\newcommand{\rbr}{\right]}
\newcommand{\lb}{\left|}
\newcommand{\rb}{\right|}
\newcommand{\pr}[2][]{
	\mathop{
		\ifx &#1&
		\mathrm{Pr}
		\else
            \mathrm{Pr}_{#1}
		\fi
		\left[#2\right]}
}
\newcommand{\s}{\bm{s}}
\renewcommand{\t}{\bm{t}}
\newcommand{\p}{\bm{p}}
\newcommand{\per}[1]{\mathop{\mathsf{Per}\left(#1\right)}}
\newcommand{\ie}{{i.e.},\ }
\newcommand{\eg}{{e.g.},\ }
\newcommand{\Oc}{\mathcal{O}}
\newcommand{\bigo}[1]{\Oc(#1)}
\newcommand{\mmstate}[1]{\mathbb{1}\kern-1pt /\kern-1pt #1 }
\definecolor{quantumviolet}{HTML}{53257F}
\definecolor{navy}{RGB}{47,60,126}
\definecolor{darkviolet}{RGB}{99,56,142}
\definecolor{darkgreen}{RGB}{39,174,96}
\newcommand{\xmark}{%
\textcolor{red}{
\tikz[scale=0.23] {
    \draw[line width=0.7,line cap=round] (0,0) to [bend left=6] (1,1);
    \draw[line width=0.7,line cap=round] (0.2,0.95) to [bend right=3] (0.8,0.05);
}}}
\newcommand{\cmark}{%
\textcolor{darkgreen}{
\tikz[scale=0.23] {
    \draw[line width=0.7,line cap=round] (0.25,0) to [bend left=10] (1,1);
    \draw[line width=0.8,line cap=round] (0,0.35) to [bend right=1] (0.23,0);
}}}
\crefname{section}{Section}{Sections}
\crefname{equation}{Equation}{Equations}
\crefname{figure}{Figure}{Figures}
\crefname{table}{Table}{Tables}
\crefname{appendix}{Appendix}{Appendices}
\crefname{theorem}{Theorem}{Theorems}
\crefname{thm}{Theorem}{Theorems}
\crefname{cor}{Corollary}{Corollaries}
\crefname{lemma}{Lemma}{Lemmas}
\crefname{proposition}{Proposition}{Propositions}
\crefname{definition}{Definition}{Definitions}
\crefname{algorithm}{Algorithm}{Algorithms}
\let\autoref\cref
\title{Towards quantum advantage with photonic state injection}
\author[1,2]{Léo Monbroussou}
\author[1,$\dagger$]{Eliott Z. Mamon}
\author[1,3,4,$\dagger$]{Hugo Thomas}
\author[1,$\dagger$]{Verena Yacoub}
\author[4]{Ulysse Chabaud}
\author[1,5]{Elham Kashefi}
\affil[1]{Sorbonne Université, CNRS, LIP6, 75005 Paris, France}
\affil[2]{CEMIS, Direction Technique, Naval Group, 83190 Ollioules, France}
\affil[3]{Quandela, 7 rue Léonard de Vinci, 91300 Massy, France}
\affil[4]{DIENS, Ecole Normale Supérieure, PSL University, CNRS, INRIA, 45 rue d'Ulm, 75005 Paris, France}
\affil[5]{School of Informatics, University of Edinburgh, 10 Crichton Street, EH8 9AB Edinburgh, United Kingdom}
\affil[$\dagger$]{\emph{These authors contributed equally.}}
\begin{document}

\maketitle

\begin{abstract}
    We propose a new scheme for near-term photonic quantum device that allows to increase the expressive power of the quantum models beyond what linear optics can do. This scheme relies upon state injection, a measurement-based technique that can produce states that are more controllable, and solve learning tasks that are not believed to be tackled classically. We explain how circuits made of linear optical architectures separated by state injections are keen for experimental implementation. In addition, we give theoretical results on the evolution of the purity of the resulting states, and we discuss how it impacts the distinguishability of the circuit outputs. Finally, we study a computational subroutines of learning algorithms named probability estimation, and we show the state injection scheme we propose may offer a potential quantum advantage in a regime that can be more easily achieved that state-of-the-art adaptive techniques. Our analysis offers new possibilities for near-term advantage that require to tackle fewer experimental difficulties.
\end{abstract}
\section{Introduction}

Although quantum computers promise large advantages over classical computing, fault-tolerant universal quantum computers are still far from being available. In particular, while photonics is one of the promising platforms for quantum computing, the technological requirements for such photonic quantum devices are huge. They often rely on the capacity to achieve adaptive measurement-based operations \cite{knill_scheme_2001, knill_quantum_2002}, and to have access to a large number of modes and initial coherent photons. 

In the meantime, sub-universal models have been proposed to achieve near-term quantum advantage. Those models are believed to have an intermediate computational advantage even without being able to achieve every operation that a fault-tolerant quantum computer could do. Boson Sampling \cite{Gard_2015}, Gaussian Boson Sampling \cite{Zhong_2020}, or IQP circuits sampling \cite{Bremner2010ClassicalSO} are good candidates, but the range of problems that one can solve using such approaches seems very limited. Finding an architecture able to offer a quantum utility to real life use case for quantum photonic device in the era of Noisy Intermediate-Scale Quantum \cite{Preskill_2018} devices is an important field of research. Previous work also used linear optic circuits with post-processing strategy to simulate universal quantum computing \cite{polino2022photonicimplementationquantumgravity}, at the cost of strongly increasing the running time of the produced algorithms.  

\begin{figure}[h]
    \centering
    \includegraphics[width=0.95\textwidth]{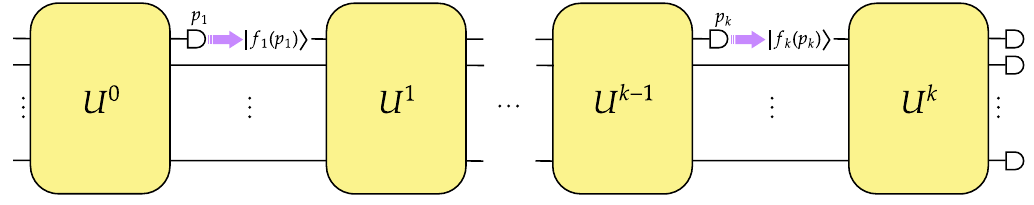}
    \caption{Quantum circuit made of linear optical blocks separated by state injections.}
    \label{fig:Introduction_framework}
\end{figure}

Quantum Machine Learning (QML) is a promising field of application for photonic platforms that requires to go beyond linear optic architectures to increase the expressivity of the resulting models, but that does not necessarily depend on whether the model is universal for quantum computation. Previous works came up with alternative solutions to incorporate non-linearity --- an important element for neural network architectures --- such as global measurements and classical activation functions between linear optical layers \cite{steinbrecher2018}, physical non-linear blocs \cite{fu2023photonic}, or adaptive gates to perform learning tasks \cite{chabaud_quantum_2021}.

In this work, we study the use of linear optical layers interleaved with state injection (SI), a measurement-based channel that does not require gate adaptivity throughout the computation, but rather the preparation and injection in the circuit of new quantum optical state. We give a detailed description of this non-linear gadget in \autoref{sec:State_Injection}. In particular, we explain why it can be easier to implement on a photonic platform with respect to \eg the proposal of \cite{chabaud_quantum_2021}. In \autoref{sec:Expressivity}, we offer a theoretical perspective on the lack of expressivity of linear optical circuits and explain how SI can increase the controllability of the output state, an important figure of merit of quantum circuit expressivity. In addition, we study how non-linear layers affect the model outputs, leading to theoretical results of separate interest on the connection between the purity of the states and the distinguishability of the output models. Finally, we offer in \autoref{sec:Proba_Estimation} an instance of a problem that our framework can solve with a potential quantum advantage in comparison with classical algorithms, and that can be implemented more easily than state of the art methods.  
\section{State Injection}\label{sec:State_Injection}

In this Section, we introduce the State Injection (SI) scheme. This method allows to increase the expressivity of the photonic quantum circuits as explained in \autoref{sec:Expressivity}, and to perform tasks that are believed hard to do classically with fewer experimental constraints in comparison with the Adaptive Linear Optics (ALO) scheme~\cite{chabaud_quantum_2021}, as explained in \autoref{sec:Proba_Estimation}. In the following, we define the SI scheme in \autoref{subsec:SIdef} and we motivate its experimental implementation in \autoref{subsec:general_exp_framework}.

    \subsection{Scheme Defintion}\label{subsec:SIdef}

The ALO scheme, that we illustrate in \autoref{fig:Experimental_Comparison_AdaptiveFeedForward}, was proposed by~\cite{chabaud_quantum_2021} where the authors built a \emph{feed-forward} scheme for linear optical quantum computation. Their setup is composed of an input Fock state, with $n$ photons spread across $m$ modes, and $k$ adaptive measurements. 
An adaptive measurement consists in measuring one mode using photon number resolving detector and configure the following $(m-1)\times(m-1)
$ unitary according to the measurement result.

We illustrate the SI scheme in \autoref{fig:Introduction_framework}. It differs from the ALO scheme by the following. Firstly, no real time reconfiguration of the quantum circuit is required to be done within the run of the experiment as all unitary matrices are preset for each run. Secondly, our adaptive part comes from choosing what is the new Fock state to be injected in the circuit after a measurement. Lastly, the unitary does not shrink in size as the number of modes is preserved throughout the computation.

\begin{definition}[State Injection]\label{def:StateInjection}
    We call \textbf{State Injection} (SI) any operation on an $m$-mode photonic platform that performs photon-counting measurements in one or several modes, and, depending on the outcomes obtained, re-injects some photons back in one or several modes.
    Overall (since no single outcome is post-selected on), this operation is described by a CPTP map on the $m$-mode, $n$-photon Hilbert space.
    Different SI operations hence correspond to different choices of modes that undergo measurements and/or re-injections, and different choices of rules that map measurement outcomes to the corresponding re-injections that should be performed. We refer to the latter as a choice of \textbf{injection functions}.
\end{definition}

In the special case where the same $k$ modes (here written as adjacent modes for simplicity) are being subject to both measurements and re-injections, and the total photon count is preserved, SI operations may be detailed as follows. 
If the photon count outcomes obtained when measuring those modes are $n_1,\dots,n_k$, then the re-injection process consists of injecting, \emph{in those modes only}, the new state $\ket{f_1(n_1,\dots,n_k),\cdots,f_k(n_1,\dots,n_k)}$, where the injection functions here take the form of maps $f_i:[\![0,n]\!]^k \to [\![0,n]\!]^k$ that may be chosen arbitrarily among those that respect the photon-number conservation constraint
\begin{equation}\label{eq:StateInjection-SpecialCase-photon-conservation-constraint}
    f_1(n_1,\dots,n_k) + \cdots + f_k(n_1,\dots,n_k) = n_1 + \cdots + n_k\,.
\end{equation}

SI is a new tool that allows to go beyond the standard linear quantum optical circuits such as those used in Boson Sampling schemes. The infinite possibilities of injection functions and measurements offer new possibilities, but we will mainly focus, in this paper, on the case where we count the photons in one mode and inject the same number of photons. This choice is motivated by experimental consideration, and by the preservation of the subspaces defined by a fixed number of particles. We use this example in \autoref{subsec:Purity}, and in \autoref{sec:Proba_Estimation}. In  \autoref{subsec:general_exp_framework} we start by analyzing the general experimental framework that is common for both scheme, building on that we will highlight the differences between the ALO and SI schemes.

\begin{figure}[h!t]
\centering
\begin{subfigure}{.5\textwidth}
    \centering
    \includegraphics[align=c,width=1.05\linewidth]{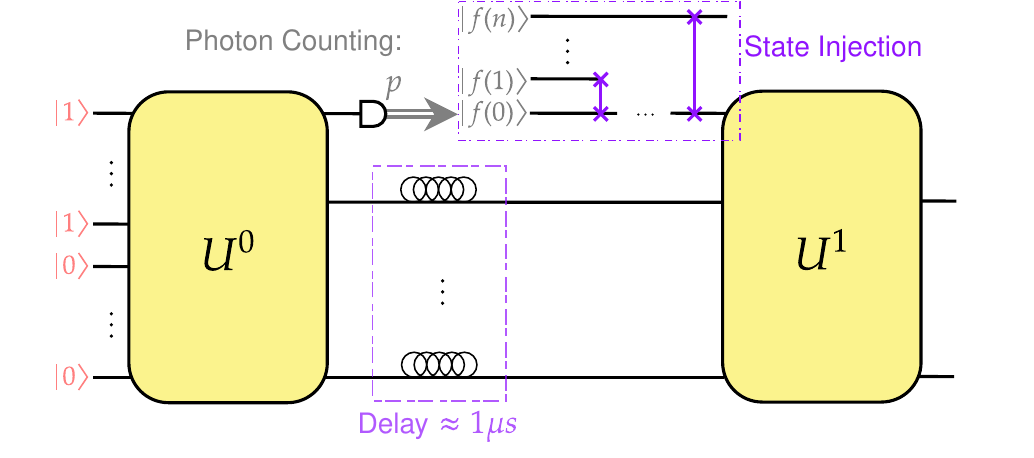}
    \caption{State Injection channel for a single mode measured.}
    \label{fig:Experimental_Comparison_StateInjection}
\end{subfigure}%
\hspace*{.1in}
\begin{subfigure}{.49\textwidth}
      \centering
    \includegraphics[align=c,width=1.05\linewidth]{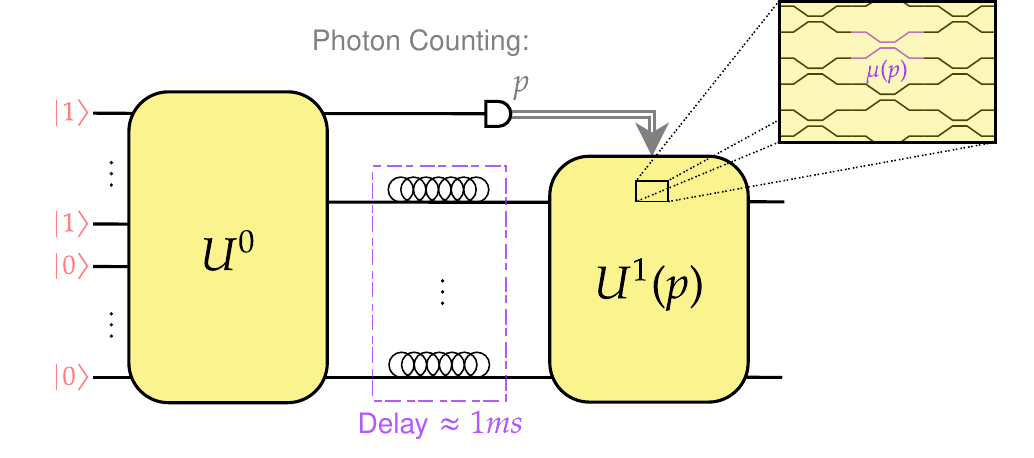}
    \caption{Feed-Forward adaptivity for a single mode measured.}
    \vfill
    \label{fig:Experimental_Comparison_AdaptiveFeedForward}
\end{subfigure}
\caption{Comparison of experimental requirements for the State Injection scheme (in \autoref{fig:Experimental_Comparison_StateInjection}) that we introduce, and the Feed-Forward scheme (in \autoref{fig:Experimental_Comparison_AdaptiveFeedForward}) proposed in \cite{chabaud_quantum_2021}.}
\label{fig:Experimental_Comparison}
\end{figure}

It is important to keep in mind that the challenges of any scheme depend on the exact task and corresponding quantum circuit. In \autoref{sec:Proba_Estimation}, we give an instance of a learning problem where the SI scheme can require less resources. In a more "near-term" perspective, one can adapt the schemes to implement a sub-universal model that matches the experimental capacity of a platform. For example, one can choose injection functions or the adaptive function in the feed-forward scheme to not depend on the measured number of photons to avoid photon counting.

\subsection{General experimental framework} \label{subsec:general_exp_framework}

In this part, we discuss the general experimental framework needed for both ALO and SI schemes. We can see that both proposals rely on a general scheme consisting of: State preparation, unitary, measurement, and adding a new unitary. In the following, we will expand the general experimental requirements for each step: 

\textbf{State Preparation.} Both schemes start with multiple single photons Fock states which requires a photon source that could emit indistinguishable photons on many parallel modes synchronously. The most common options for this are Quantum dots (QD) and time multiplexing~\cite{maring2024versatile,wang2017high} or Spontaneous Parametric Down Conversion (SPDC) sources~\cite{zhong201812}.  

\textbf{Unitary.} The unitaries should be fully programmable which is the case for the photonic chips. Photonic chips, or processors, are made of waveguides shaped in beam splitters, the reflectivities of the latters are controlled by phase shifters. The waveguides are mostly engraved in glass or in silicon nitride. The phase shifters, which are the programmable parts,could be controlled by thermo-optical effects which can be quite slow taking from hundreds of microseconds to milliseconds for each reconfiguration~\cite{calvarese2022strategies,taballione202320} or piezo-electrical (optomechanical) effects that are usually faster, tuned in hundreds of microseconds, but they have challenges for scalability~\cite{tian2024piezoelectric}.    

\textbf{Measurements.} Both schemes rely on single photon measurements between the unitaries. ALO requires Photons number resolving (i.e. photon counting) while SI can accept threshold detectors as well (i.e. detecting only the presence and the absence of photons) depending on the application.  

\textbf{Addition of unitary.} To be able to use the single photon measurements in real time, one should mostly consider independent chips, which implies that the spatial modes of both chips would be linked with optical fibers which imposes the implementation of a good temporal synchronization of all the modes. Noting that when we mention synchronization between the photons in state preparation or the modes between the unitaries we mean matching the temporal delays in order to maximize Hong-Ou-Mandel effect~\cite{hong1987measurement}.

In the case of the ALO scheme, the main challenge that needs to be adressed is the waiting time to reprogram the chips after each measurements. For example, if we consider an implementation with the most common setup relying on thermo-optical effects, we will have to put delays (or quantum memories) between the unitaries that are in the order of milliseconds, equivalent to few hundreds of kilometers of fibers which will have more losses and harder synchronization.

In the new SI scheme, we relieved the need of the real time adaptability of the consequent unitaries, we choose preset parameters for all of them instead. We rely on active real-time Fock state preparation to replace the few measured modes depending on their measurement outcome and synchronizing the new states with the unmeasured modes of each unitary, using mainly external (to the unitary) fast optical switches which would reduce the delay required for the non measured modes. Average commercial switches relying on micro or nano electro-optical technologies~\cite{thomas2024noise, memeo2024micro} can work in microseconds range which reduces the optical delay to the order of few hundreds meters.
For some applications, as in \autoref{sec:Proba_Estimation}, this scheme will also allow the reduction of the number of photons needed at the beginning of the experiment.

Notice that a similar scheme has been recently proposed in~\cite{sulimany2024quantum}, where they partially measure a coherent pulse by heterodyne on some modes and accordingly apply an adaptive displacement operator on the re-injected coherent state, but their purpose from that was to have a notion of security rather than an advantage in computing.

\section{Improving the controllability of photonic circuits}\label{sec:Expressivity}

In this Section, we offer theoretical arguments that SI can improve the quantum model produced by a circuit made of linear optic layers separated by state injections (see \autoref{fig:Introduction_framework}). After recalling the framework of ordinary linear quantum optical circuits in \autoref{subsec:Structure_BS}, we highlight in \autoref{subsec:Limitation_BS} their inherent limitation of controllability, defined in terms of degrees of freedom that the parameters may explore in state space. Then, we explain in \autoref{subsec:Controllability_Improvement} how to improve the controllability while protecting some interesting properties of linear optic circuits. Finally, we study in \autoref{subsec:Purity} how SI, while providing an increase in control, reduces the purity of the output state. We provide there results on the purity evolution, and on the relation between low purity of output states and their low distinguishability.

\subsection{Structure of linear quantum optics}\label{subsec:Structure_BS}

We consider linear-optical network with $m$ modes and set of simple optical elements (beam-splitters and phase-shifters). In general, those circuit are used while considering input states made with $n$ identical photons that pass through the modes and optical elements and then measured to determined their locations. Here, we do not consider any adaptive photon-number measurements. 

We consider linear-optical networks/circuits made of simple optical elements, beam-splitters and phase-shifters, over $m$ modes. Each element may either be regarded as \textit{fixed}, or \textit{parameterized}, meaning that the gate has a tunable parameter $\theta \in [0,2\pi]$ that can be freely varied --- corresponding to a beam-splitter's \textit{angle} or a phase-shifter's \textit{phase}. If there are $p$ parameterized gates in the circuit, we denote by $\bm{\theta} \in \Theta:=[0,2\pi]^p$ the tuple of all the parameter values. Into the circuit are sent $n$ (indistinguishable) photons in some pure quantum state. A pure quantum state of $n$ photons is a normalized vector in the \textit{$n$-photon Fock space}, which is the Hilbert space of all complex superpositions of the \textit{basis Fock states} $\ket{\bm s}$ (for all $\bm s = (s_1\dots,s_m)\in \mathbb N^m$ such that $s_1 + \dots + s_m = n$). We denote the set of basis Fock states by $\Phi_{m,n}$, of which there are $d_n := |\Phi_{m,n}| = \binom{m+n+1}{n}$ many. The direct sum of all the $n$-photon Fock spaces is known as \textit{the Fock space}, which is infinite-dimensional. We will use the notations $|\bm s| = \sum_{i=1}^m s_i$ and $\bm s! = \prod_{i=1}^m s_i !$. 

An arrangement of beam-splitters and phase-shifters over $m$ modes specifies a given unitary $W^1 \in SU(m)$\footnote{For simplicity, in this work we only ever consider \textit{special}-unitary matrices, i.e. unitary matrices $U$ with $\det(U)=1$. This is without loss of generaility, as it just amounts to a convention choice in how one writes the $m \times m$ unitary matrices representing beam-splitters and phase-shifters on the $m$-mode system.}, which dictates the evolution of a single photon in the circuit. Conversely, all unitaries in $SU(m)$ may be realized as some arrangement of beam-splitters and phase-shifters \cite{Reck-ExperimentalRealization-1994}. A single-photon unitary $W^1 \in SU(m)$ determines an $n$-photon
unitary $W^n \in SU(d_n)$, through the so-called \textit{($n$-photon) photonic homomorphism}: $W^n := \varphi(W^1)$ (see \autoref{thm:BS_Model_Bloc}). Since quantum linear optics preserves the photon number $n$, the global unitary $W$ (over the whole Fock space) of the circuit corresponds to an inifitly-sized block-diagonal matrix, where the blocks are the evolutions for a fixed photon number. Lastly, a fixed architecture of beam-splitters and phase-shifters will give rise to a parametrization $\bm{\theta}\mapsto W^1(\bm{\theta})$, and so the spaces of accessible unitaries (for all possible parameter values $\bm{\theta}$) will generally be smaller. We summarize the different unitary matrices introduced, the spaces they live in, and the homomorphism relation $\varphi$, in \autoref{fig:Subspace}.
\begin{figure}[h!t]
\centering
\begin{subfigure}{.5\textwidth}
    \centering
    \includegraphics[align=c,width=1.0\linewidth]{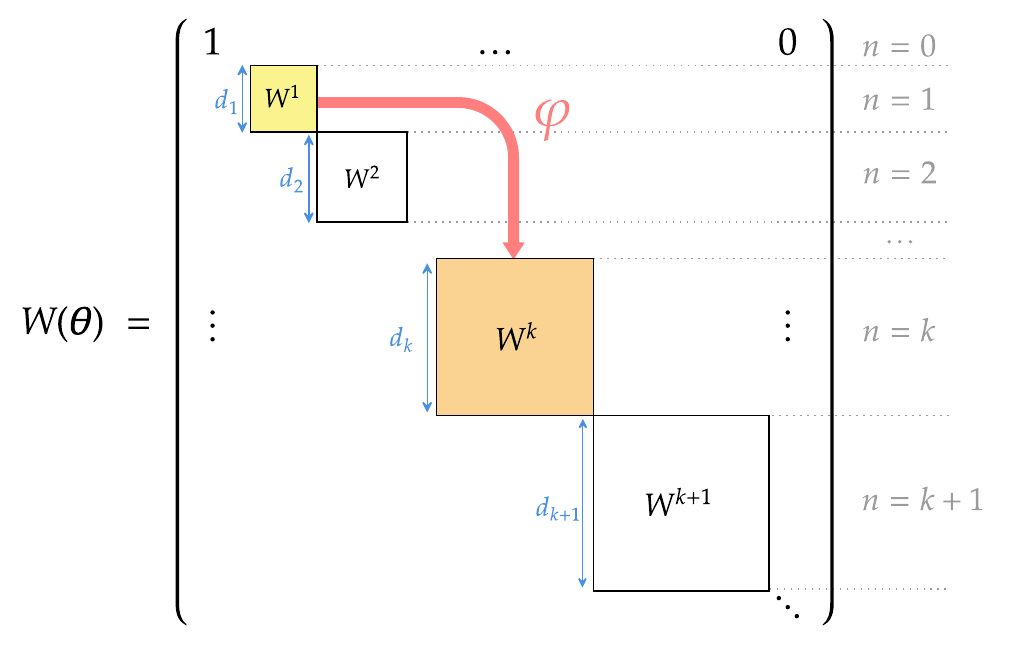}
    \caption{Representation of linear optic quantum circuit equivalent unitary matrix as a bloc diagonal matrix. Each bloc $W^k$ corresponds to a subspace of $k$ particles.}
    \label{fig:Unitary_Bosonic_Circuits}
\end{subfigure}%
\hspace*{.1in}
\begin{subfigure}{.49\textwidth}
      \centering
    \includegraphics[align=c,width=1.0\linewidth]{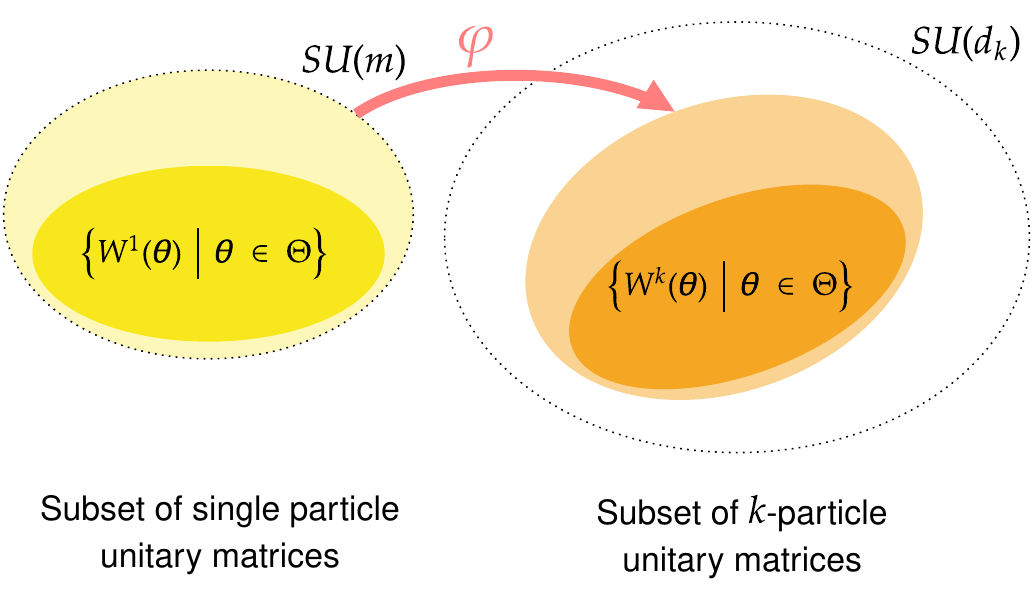}
    \caption{Representation of the spaces of reachable unitaries using linear quantum optics over $m$ modes, for $1$ photon (left) and $k$ photons (right). These spaces are shown in translucent color, while the opaque colors are the subsets that arise through a specific parametrization $\bm{\theta}\mapsto W^1(\bm{\theta})$ in terms of beam-splitters and phase-shifters.}
    \vfill
    \label{fig:subspace_representation}
\end{subfigure}
\caption{Representation of the particle number preserving equivalent unitary matrix (\autoref{fig:Unitary_Bosonic_Circuits}) and illustration of the manifold described by the achievable set of matrices for different number of particles (\autoref{fig:subspace_representation}). The homomorphism $\varphi$ is described in \autoref{thm:BS_Model_Bloc}. Each bloc $W^n$ is a parametrized unitary matrix of dimension $d_n = \binom{m+n+1}{n}$.}
\label{fig:Subspace}
\end{figure}

The homomorphism $\varphi:SU(m)\to SU(d_n)$ describes the way in which second-quantization enforces the evolution of $n$ indistinguishable bosons, given the evolution unitary for a single boson. There are different equivalent ways to describe $\varphi$. One approach builds an expression for $\varphi(W^1)$ in terms of the matrix \textit{permanents} \cite{marcus_permanents_1965} of certain matrices related to $W^1$ \cite{aaronson_computational_2011}; another approach is to leverage the fact that $\varphi$ is an injective homomorphism between two Lie groups, and so can be understood in terms of its derivative action on the Lie Algebra $\mathfrak{su}(m)$ \cite{Parellada-NogoTheorems-2023}.
We recall in \autoref{thm:BS_Model_Bloc} the first approach's expression.


\begin{restatable}[{Photonic homomorphism in terms of matrix permanents, from \cite[Section 3]{aaronson_computational_2011}}]{thm}{BSModel}
\label{thm:BS_Model_Bloc}
Given a unitary $W^1(\bm{\theta}) \in SU(m)$ describing an $m$-mode linear optical circuit (possibly parametrized by $\bm{\theta}$), the corresponding unitary $W^n(\bm{\theta}):=\varphi(W^1 (\bm{\theta}))$ describing the $n$-photon evolution is given, for all $\bm s, \bm t \in \Phi_{m,n}$, by
\begin{equation}\label{eq:bloc_unitary_expression_BS}
    \bra{\bm s} W^n(\bm{\theta}) \ket{\bm t} = \frac{\per{I_S^\dagger W^1(\bm{\theta}) I_T}}{\sqrt{\bm s! \bm t !}} \, ,
\end{equation}
with $I_S$ a linear substitution of variables such that $I_S [x_1, \dots, x_m] = [x_{s_1}, \dots, x_{s_m}]$.
\end{restatable}

In the following we will refer to the \textit{dimension} of various subsets of unitary matrices, or of density matrices. We precise what mean by the word dimension in \autoref{chap:Details_Dimension}, but in short, it corresponds to a \emph{local manifold dimension}.

\subsection{Limited controllability of linear quantum optics}\label{subsec:Limitation_BS}

Consider a parametrized linear optical circuit, $W^1(\bm{\theta})$, with $p$ parametrized gates, \ie $\bm{\theta} \in \Theta=[0,2\pi]^p$. The space of $m \times m$ unitaries that are accessible, as all parameters are explored, is by definition included in $SU(m)$. Therefore (see \autoref{chap:Details_Dimension}),
\begin{align}
    \quad \dim(\{ W^1(\bm{\theta}) \; | \; \bm{\theta} \in \Theta\}) \leq \min(p,\ m^2 - 1).
\end{align}


For the corresponding $n$-photon unitaries, the existence of the injective homomorphism $\varphi$ implies that the dimension of the set of the $n$-photon unitaries reached is equal to that of the single-photon unitaries, and therefore it obeys the same limitations:
\begin{equation}
    \dim(\{ W^n(\bm{\theta}) \; | \; \bm{\theta} \in \Theta\}) = \dim(\{ W^1(\bm{\theta}) \; | \; \bm{\theta} \in \Theta\}) \leq \min(p,\ m^2 - 1).
\end{equation}

This limitation in the set of achievable unitary matrices dimension is a constraint on the expressivity of the model output. Other figures of merit for the expressivity of quantum models exist, including the distance to a 2-design \cite{PRXQuantum.3.010313} that characterizes the distribution of the unitary matrices, or the Fourier expressivity \cite{xiong2023, mhiri2024}. In this work, we focus on the notion of \textit{controllability} of the output state of the quantum circuit, \ie the number of independent directions it can locally explore in the space of density matrices.

Recent works have highlighted the impact of the controllability of quantum circuits, especially in QML, by \eg studying the rank of the Quantum Fisher Information Matrix (QFIM) \cite{Larocca_2023}, or the dimension of the Dynamical Lie Algebra \cite{Larocca_2022, Ragone2023, Fontana2023} generated by the Hamiltonians in the circuit. Since state injections are not unitary evolutions (if no post-selection on a given outcome is done) but CPTP maps, we study the controllability at the level of the output state's density matrix.  

Studying SI forces us to consider a tool to characterize the controllability that is not only defined for unitary transformations. Accordingly, we introduce a new measure of controllability of the output state of a parametrized quantum circuit, using its corresponding Jacobian rank.

\begin{definition}[Number of degrees of freedom of a state]\label{def:DoF}
    The \textbf{number of degrees of freedom} of an $n$-photon state $\rho(\bm{\theta})$ at a point $\bm{\theta}$ in the parameter space is defined as the rank of Jacobian matrix of the map $\rho:\Theta\to\mathbb{C}^{d_n \times d_n}$ calculated at point $\theta$:
    \begin{equation}
        \mathrm{DoF}(\rho(\bm{\theta})) = \rank[J \rho(\bm{\theta})]\,.
    \end{equation}
    The Jacobian matrix considered is actually the one of the map $\tilde{\rho}:\Theta\to \mathbb{R}^{2d_n^2}$ that results from viewing a complex matrix $\rho(\theta) \in \mathbb{C}^{d_n \times d_n}$ as a real vector $\tilde{\rho}(\theta)$ of length $2d_n^2$ (through concatenation of all the columns of the matrix, and splitting of each complex scalar into its real and imaginary parts). We recall that the Jacobian matrix of a differentiable function $f:\mathbb{R}^a \to \mathbb{R}^b$ at point $\bm{x}\in\mathbb{R}^a$ is the $b \times a$ real matrix defined by $\big[J f (\bm{x})\big]_{ij}:=\frac{\partial f_i}{\partial x_j}$.
\end{definition}

Since the considered circuits may apply successively operations that are standard parameterized optical gates, fixed optical gates, and state injections, overall the map $\bm{\theta} \mapsto \rho(\bm{\theta})$ that sends a parameter tuple to the final state's density matrix is \emph{analytic}\footnote{A function is \textit{analytic} if it is smooth and if it agrees locally with its Taylor series around each point in the domain. Sines, cosines, as well as matrix products, sums, and exponentials, are all analytic; and compositions of analytic functions are still analytic. Hence, all the maps $\bm{\theta} \mapsto \rho_{\mathrm{out}}(\bm{\theta})$ considered in this work --- even those including state injections --- are analytic, being only compositions of sine, cosines, and products and sums of matrices.}, and consequently, the number of degrees of freedom $\bm{\theta} \mapsto \mathrm{DoF}(\rho(\bm{\theta}))$ is constant \textit{almost-everywhere} on the parameter space, due to \cite[Prop. B.4]{Bamber-HowMany-1985} (a more precise writing would be exactly the same as \cite[Lemma 4]{monbroussou2023}, with rank of density matrix's Jacobian in place of rank of the pure state's Quantum Fisher Information Matrix (QFIM); see also \autoref{chap:Details_Dimension}).

\begin{restatable}[Almost-constant property of number of degrees of freedom]{thm}{QFIMrankthm}
\label{thm:DoFthm}
Let $\rho(\bm{\theta})$ be the density matrix of the output state of a linear optical circuit, with or without state injections. Then, its number of degrees of freedom is, \emph{almost everywhere} on the considered parameter space $\Theta$, constant and equal to
\begin{equation}\label{eq:DofThmEq}
    \mathrm{DoF}_{\mathrm{max}}(\rho):=\max\limits_{\bm{\theta} \in \Theta}\mathrm{DoF}(\rho(\bm{\theta})).
\end{equation}

\end{restatable}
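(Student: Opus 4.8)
The plan is to invoke the general fact that the rank of the Jacobian of an analytic map between Euclidean spaces is constant on a dense open subset of its domain, and attains its maximum value there. Concretely, one first argues that $\bm\theta \mapsto \tilde\rho(\bm\theta)$ is analytic as a map $\Theta \to \mathbb R^{2d_n^2}$: the output state is obtained by composing state preparation (a fixed vector), conjugations by the parametrized block unitaries $W^n(\bm\theta)$ — whose entries are, by \autoref{thm:BS_Model_Bloc}, permanents of matrices with entries that are analytic (indeed polynomial in $\sin\theta_j,\cos\theta_j$, hence analytic) functions of $\bm\theta$ — and the state injection channels, which act as finite sums of conjugations by fixed (outcome-dependent) Kraus operators that are themselves built from the block unitaries. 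Sums, products, and compositions of analytic maps are analytic, and the real/imaginary-part and column-concatenation operations are linear, hence analytic; so $\tilde\rho$ is analytic on $\Theta$ (or on the interior, if one wants to avoid boundary subtleties, which is harmless for an almost-everywhere statement).

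Next I would set up the rank argument. For each $\bm\theta$, $\mathrm{DoF}(\rho(\bm\theta)) = \rank[J\tilde\rho(\bm\theta)]$ is the rank of a real $2d_n^2 \times p$ matrix whose entries are analytic functions of $\bm\theta$ (derivatives of analytic functions are analytic). Let $r := \mathrm{DoF}_{\mathrm{max}}(\rho)$, attained at some $\bm\theta_0$. Then there is an $r\times r$ minor of $J\tilde\rho$ that is nonzero at $\bm\theta_0$; this minor is an analytic function of $\bm\theta$, and an analytic function on a connected domain that is not identically zero vanishes only on a set of measure zero (indeed a closed nowhere-dense set), by the standard identity theorem for real-analytic functions — this is exactly the content invoked via \cite[Prop. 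B.4]{Bamber-HowMany-1985} and the analogue in \cite[Lemma 4]{monbroussou2023}. Hence on the complement of that measure-zero set the rank is at least $r$; but it is at most $r$ everywhere by maximality of $r$, so it equals $r$ almost everywhere. If $\Theta$ is not connected one applies the argument on each connected component and notes $r$ is still attained on at least one of them, and on the others the rank is bounded by $r$; taking the union of the (finitely or countably many) exceptional measure-zero sets preserves measure zero, so the conclusion is unchanged — alternatively one simply works on $(0,2\pi)^p$, which is connected.

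The only genuinely delicate point is the analyticity claim for maps that include state injections: one has to be slightly careful that an SI channel, written as $\rho \mapsto \sum_{\bm n} K_{\bm n} \rho K_{\bm n}^\dagger$ with the sum over measurement outcomes $\bm n$ and the $K_{\bm n}$ depending on the surrounding linear-optical unitaries, really is an analytic (in fact polynomial-in-the-unitary-entries) function of $\bm\theta$ — there is no branching or case distinction in $\bm\theta$ itself, since all outcomes are summed over and the injection functions act on outcome labels, not on parameters. Once this is granted, the rest is the routine analytic-rank argument above; I would state the analyticity carefully (the footnote in \autoref{def:DoF} already sketches it) and then cite the almost-everywhere-constant-rank lemma to conclude.
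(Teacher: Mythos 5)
Your proposal is correct and follows the same route as the paper: establish that $\bm\theta\mapsto\tilde\rho(\bm\theta)$ is analytic (the SI channels being fixed, parameter-independent CPTP maps summed over outcomes, so no branching in $\bm\theta$), then apply the almost-everywhere-constant-rank property of Jacobians of analytic maps, which the paper simply cites from \cite[Prop.~B.4]{Bamber-HowMany-1985} and \cite[Lemma 4]{monbroussou2023} while you spell out the underlying nonvanishing-minor and real-analytic identity-theorem argument. No gaps.
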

The practical consequence of \autoref{thm:DoFthm} is that drawing a point $\bm{\theta} \in \Theta$ uniformly at random and calculating the state's number of degrees of freedom at that point yields $\mathrm{DoF}(\rho(\bm{\theta}))=\mathrm{DoF}_{\mathrm{max}}(\rho)$ with probability $1$.

Hence in this work, we numerically evaluate the quantity $\mathrm{DoF}_{\mathrm{max}}(\rho)$ as follows: given an optical circuit over $m$ modes and an $n$-photon input state $\rho_{\mathrm{in}}$, we classically simulate the $d_n \times d_n$ output density matrix $\rho_{\mathrm{out}}(\bm{\theta})$ through successive applications of beam-splitters and phase-shifters unitary channels and state injection CPTP maps, using Python's library \textit{PyTorch} \cite{PyTorch}, which, by relying on automatic differentiation, enables us to access the Jacobian $J \rho_{\mathrm{out}}(\bm{\theta})$ of the output state. Then, we draw uniformly at random $\bm{\theta}$, and calculate $\rank[J \rho(\bm{\theta})]$. With probability 1, this number is equal to $\mathrm{DoF}_{\mathrm{max}}(\rho)$ (\autoref{eq:DofThmEq}).

Since the controllability of the output state $\rho_{\mathrm{out}}(\bm{\theta})$ lies entirely in the controllability of the single-photon unitary $W^1(\bm{\theta})$ of the whole circuit (see \autoref{chap:Details_Dimension}), we immediately have the following limitation.

\begin{restatable}[Controllability limitation of linear quantum optics]{thm}{ControlLimitationBS}
\label{thm:Control_Limits_BS}
    Consider an $n$-photon pure state ${\rho_{\mathrm{in}} := \ketbra{\psi_{in}}}$ entering an $m$-mode linear optical circuit $W^1(\bm{\theta})$ with $p$ parametrized gates, and without state injections.
    Then, the controllability of the output density matrix $\rho_{\mathrm{out}}(\bm{\theta}) := W^n(\bm{\theta}) \rho_{\mathrm{in}} W^n(\bm{\theta})^\dagger$ is bounded by
    \begin{equation}
        \mathrm{DoF}_{\mathrm{max}}(\rho_{\mathrm{out}}) \leq \dim(\{ W^n(\bm{\theta}) \; | \; \bm{\theta} \in \Theta\}) \leq m^2 - 1.
    \end{equation}
\end{restatable}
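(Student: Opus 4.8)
The plan is to bound $\mathrm{DoF}_{\mathrm{max}}(\rho_{\mathrm{out}})$ by a chain of two inequalities. For the right-hand inequality, $\dim(\{W^n(\bm{\theta})\mid\bm{\theta}\in\Theta\})\le m^2-1$, I would invoke the discussion of \autoref{subsec:Limitation_BS} directly: the accessible set of single-photon unitaries is a subset of $SU(m)$, whose manifold dimension is $m^2-1$, and since $\varphi$ is an injective Lie group homomorphism (cited via \autoref{thm:BS_Model_Bloc} and the surrounding text), the accessible set of $n$-photon unitaries has the same dimension. So this part is essentially a restatement of what precedes the theorem.

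The substance is the left-hand inequality, $\mathrm{DoF}_{\mathrm{max}}(\rho_{\mathrm{out}})\le\dim(\{W^n(\bm{\theta})\mid\bm{\theta}\in\Theta\})$. Here I would argue that the map $\bm{\theta}\mapsto\rho_{\mathrm{out}}(\bm{\theta})=W^n(\bm{\theta})\,\rho_{\mathrm{in}}\,W^n(\bm{\theta})^\dagger$ factors through the map $\bm{\theta}\mapsto W^n(\bm{\theta})$, i.e.\ it is the composition of $\bm{\theta}\mapsto W^n(\bm{\theta})$ with the smooth conjugation map $U\mapsto U\rho_{\mathrm{in}}U^\dagger$ defined on $SU(d_n)$. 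Since rank of a Jacobian cannot increase under composition with a differentiable map (chain rule: $J(g\circ h)=(Jg)(Jh)$, so $\rank J(g\circ h)\le\rank Jh$), at every $\bm{\theta}$ we get $\mathrm{DoF}(\rho_{\mathrm{out}}(\bm{\theta}))\le\rank[JW^n(\bm{\theta})]$. Taking the maximum over $\bm{\theta}$ on both sides, and identifying $\max_{\bm{\theta}}\rank[JW^n(\bm{\theta})]$ with the manifold dimension of the image $\{W^n(\bm{\theta})\mid\bm{\theta}\in\Theta\}$ (again via the almost-everywhere-constant-rank argument of \autoref{thm:DoFthm}, applied to the analytic map $\bm{\theta}\mapsto W^n(\bm{\theta})$, together with the definition of dimension recalled in \autoref{chap:Details_Dimension}), yields $\mathrm{DoF}_{\mathrm{max}}(\rho_{\mathrm{out}})\le\dim(\{W^n(\bm{\theta})\mid\bm{\theta}\in\Theta\})$.

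The main obstacle, and the point needing the most care, is the identification of $\max_{\bm{\theta}}\rank[J(\text{parametrization})]$ with the \emph{manifold dimension} of the parametrization's image. One has to be careful that the image is a genuine (immersed) submanifold almost everywhere — which follows from analyticity and the constant-rank theorem applied on the dense open set where the rank is maximal — and that the conjugation map $U\mapsto U\rho_{\mathrm{in}}U^\dagger$ is being used only as a smooth map between ambient Euclidean spaces, so that the chain-rule bound on Jacobian ranks is valid pointwise without any rank assumptions on the conjugation map itself. I would keep the argument at the level of Jacobian ranks throughout, deferring all manifold-dimension bookkeeping to \autoref{chap:Details_Dimension}, and simply note that this is where one reads off $\dim(\{W^n(\bm{\theta})\})\le\dim SU(d_n)$ is \emph{not} what we want — we want the stronger $\le m^2-1$, which comes from $\varphi$'s injectivity, not from the dimension of the codomain $SU(d_n)$.
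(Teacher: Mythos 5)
Your proposal is correct and follows essentially the same route as the paper: the left-hand inequality is exactly the composition/chain-rule argument of \autoref{chap:Details_Dimension} (the map $\bm{\theta}\mapsto\rho_{\mathrm{out}}(\bm{\theta})$ factors through $\bm{\theta}\mapsto W^n(\bm{\theta})$, so Jacobian ranks can only drop), and the right-hand inequality is the injectivity of $\varphi$ combined with $\dim SU(m)=m^2-1$, as established just before the theorem. Your closing remark that the useful bound comes from $\varphi$'s injectivity rather than from the codomain dimension $\dim SU(d_n)$ is exactly the right point of care.
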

In fact, note that in the case where the input state consists of all photons in one same mode, i.e., ${\rho_{\mathrm{in}} = \ketbra{n,0,\dots,0}}$, an even tighter bound of $O(m)$ instead of $O(m^2)$ may be shown to hold.\footnote{This bound can be obtained by an explicit calculation of the rank of the action of the derivative of the photonic homomorphism onto this initial state.}

There exists subspace preserving quantum circuits that do not suffer from such controllability limitations, \eg RBS-based Hamming Weight preserving quantum circuits \cite{monbroussou2023}. Those ansatz are particularly useful as they are likely to avoid vanishing gradient phenomena, the so-called \emph{barren plateau} \cite{McClean2018}, when considering subspaces of polynomial dimension with respect to the number of qubits \cite{Larocca_2022, Ragone2023, Fontana2023, diaz2023}. They can often be classically simulated \cite{cerezo2024does, goh2023}, meaning that such quantum circuits would offer only polynomial advantage, if they are to offer any advantage at all. In the following, we explain how SI can increase the controllability of the quantum circuit while maintaining -- if needed -- the subspace preserving properties of the circuit.

\subsection{Controllability improvement with state injection}\label{subsec:Controllability_Improvement}

Since circuits that include the operations of state injection proposed in this work go beyond  linear optical circuits, they are not subject to \autoref{thm:Control_Limits_BS}, and therefore their output states $\rho_{\mathrm{out}}(\bm{\theta})$ may a priori enjoy controllability $\mathrm{DoF}(\rho_{\mathrm{out}})$ that goes beyond the $m^2 -1$ upper bound.

In \autoref{fig:DoF_Evolv}, we study the controllability of an example circuit with state injections, for $m=6$ modes, and $n=3$ photons, all input into the first mode. The type of state injection chosen here is the simplest one can consider: it is the special case of \autoref{eq:StateInjection-SpecialCase-photon-conservation-constraint} in \autoref{def:StateInjection} with $k=1$ and $f_1(n_1):=n_1$.
For both a circuit with these state injections, and the same one without the state injections (\autoref{fig:DoF_Comparison_Circuits}), we evaluate numerically the number of degrees of freedom $\mathrm{DoF}(\rho^{(i)})$ (\autoref{def:DoF}) of each intermediate state $\rho^{(i)}(\bm{\theta}^{(i)})$ of the circuit (\autoref{fig:DoF_Plot}). See \autoref{subsec:Limitation_BS} for more details about how these numbers are calculated.

\begin{figure}[h!t]
\centering
\begin{subfigure}{.5\textwidth}
    \centering
    \includegraphics[align=c,width=1.0\linewidth]{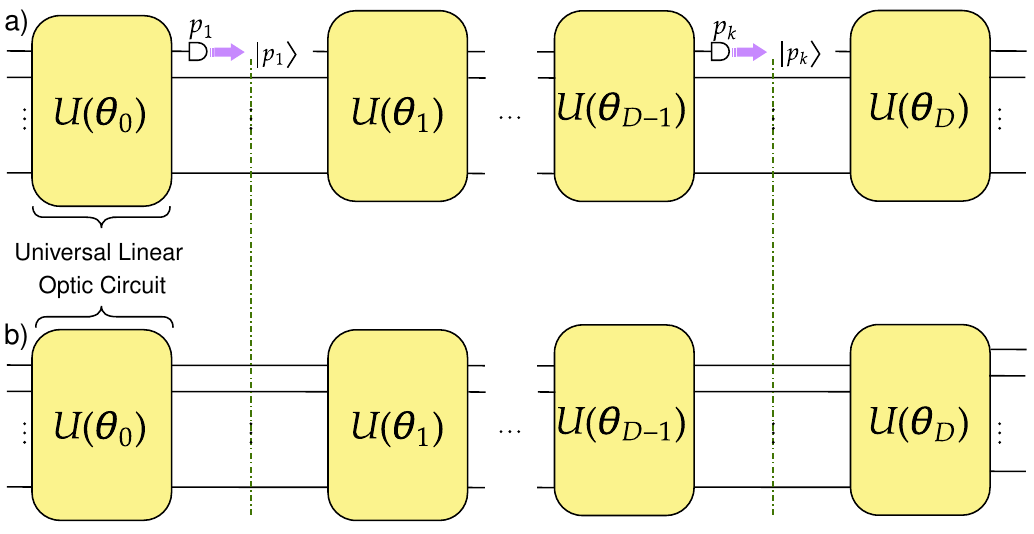}
    \caption{Both circuits are made of 6-mode Linear Optic block that are universal but in a) those blocks are separated with state injections, while in b) those blocks are connected.}
    \label{fig:DoF_Comparison_Circuits}
\end{subfigure}%
\hspace*{.1in}
\begin{subfigure}{.49\textwidth}
      \centering
    \includegraphics[align=c,width=1.1\linewidth]{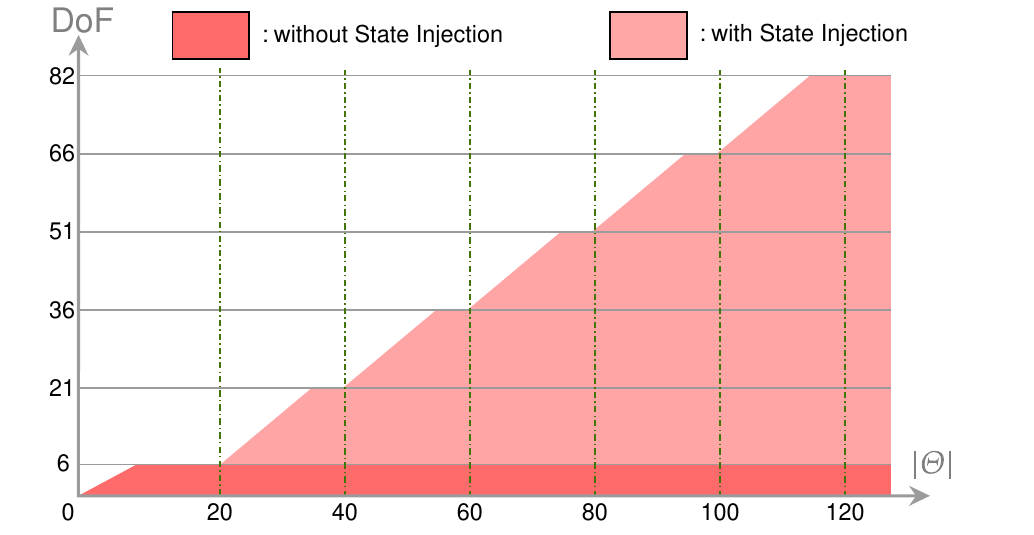}
    \caption{Evolution of the number of Degrees of Freedom of the state in the subspace of $3$ particles according to the number of beam-splitters we consider.}
    \vfill
    \label{fig:DoF_Plot}
\end{subfigure}
\caption{Degrees of Freedom (DoF) evolution comparison between two circuits. Each block $U(\bm{\theta}_i)$ is universal in the sense that it can reach any $6 \times 6$ orthogonal matrix in the subspace of $1$ particle. The vertical bars represent the application of the state injection operation, where we measure on one mode $p$ particles and we inject  on the measured mode the Fock state $\ket{p}$.}
\label{fig:DoF_Evolv}
\end{figure}

The unitaries in the circuit considered only consist of beam-splitters, spreaded over the $m=6$ modes. We consider each yellow block of beam-splitters in \autoref{fig:DoF_Comparison_Circuits} to be universal, which only requires $\dim(O(m)) = m(m-1)/2 = 15$ parameters, and we add 5 extra random beam-splitters to that to illustrate that they do not improve the controllability of the state beyond that.

We observe in \autoref{fig:DoF_Plot} that indeed, the use of state injections (green lines) has enabled breaking the limitation of controllability of output states of plain beam-splitters (see each little plateau just before each green line).

    \subsection{Purity evolution with state injections}\label{subsec:Purity}

Using non-unitary channels such as state injection, or other measurement-based methods, allows one to increase the controllability of the final state but decreases its purity $\Tr[\rho_{\mathrm{out}}^2]$. Many algorithms relying on non-linear channels \cite{coyle2024trainingefficientdensityquantummachine, Cong_2019} to increase the expressivity of their model do not consider the cost of reducing the purity of the final state. As controllability can be increased via SI, care should be taken not to reach the maximally mixed state. We therefore need to address the two following questions.
\begin{itemize}
    \item At what rate does the purity decrease when using SI ?
    \item What is the role of purity in that model ? 
\end{itemize}

For the sake of simplicity, we consider the special case in which the SI layer merely consists in measuring a single mode occupancy. 
In this case, we can state the following \autoref{thm:PurityEvolution}.

\begin{restatable}{thm}{PurityEvolution}\label{thm:PurityEvolution}
    Consider a quantum circuit made of $m$ modes with an initial pure state with $n$ photons. If there is a single SI layer in the circuit, which consists in measuring the number of photons $p \in [\![ 0, n ]\!]$ in one of the modes and re-injecting the state $\ket{p}$ in that same mode, then the purity of the final state is given by
    \begin{equation}
        \gamma(\rho_{\mathrm{out}}) = \Tr[\rho_{\mathrm{out}}^2] = \sum_{i=0}^n \Pr[i]^2\, \textrm{,}
    \end{equation}
    with $\Pr[i]$ the probability of measuring $i$ photons on intermediate state just preceding the state injection. 
\end{restatable}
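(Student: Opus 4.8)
The plan is to track the global state through the single SI layer, exploiting the fact that measuring a mode's occupancy and re-injecting the Fock state $\ket{p}$ decoheres the state into a direct sum of blocks labelled by photon number in the measured mode. First I would write the intermediate state just before the SI layer as a pure state $\ket{\chi(\bm{\theta})} = \sum_{i=0}^n \sqrt{\Pr[i]}\,\ket{i}\otimes\ket{\phi_i}$, where the first tensor factor is the measured mode (Fock-number eigenbasis) and $\ket{\phi_i}$ is the (normalized) conditional state on the remaining $m-1$ modes given that $i$ photons occupy the measured mode; here $\Pr[i] = \braket{\chi}{(\ketbra{i}\otimes\1)}{\chi}$ is exactly the probability of the outcome $i$. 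Note this is a Schmidt-type decomposition across the bipartition ``measured mode $\mid$ rest''.

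Next I would apply the SI CPTP map. Measuring the first mode and finding $i$ photons collapses the state to $\ket{i}\otimes\ket{\phi_i}$ with probability $\Pr[i]$; re-injecting $\ket{p}$ with $p=i$ (the rule $f_1(i)=i$) leaves this branch unchanged. Hence the output density matrix is $\rho_{\mathrm{out}} = \sum_{i=0}^n \Pr[i]\,\ketbra{i}\otimes\ketbra{\phi_i}$. The cross terms $i\neq i'$ are annihilated precisely because the measurement is in the Fock-number basis of the measured mode, and they are not restored because the injection functions return the measured value. Then I would compute the purity directly: since the $\ketbra{i}$ are mutually orthogonal projectors on the measured mode, $\rho_{\mathrm{out}}^2 = \sum_i \Pr[i]^2\,\ketbra{i}\otimes\ketbra{\phi_i}$, and taking the trace (using $\Tr[\ketbra{\phi_i}]=1$) gives $\Tr[\rho_{\mathrm{out}}^2] = \sum_{i=0}^n \Pr[i]^2$, as claimed.

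One subtlety to handle carefully is that the $\ket{\phi_i}$ need not be mutually orthogonal on the remaining $m-1$ modes; orthogonality of the $\ket{i}$ on the measured mode alone is what makes the block structure and the purity computation go through, so I would make that point explicit rather than invoke a full Schmidt decomposition. I should also note the harmless edge case where $\Pr[i]=0$: that branch simply drops out and $\ket{\phi_i}$ is immaterial. I do not need to assume the subsequent evolution after the SI layer is trivial: since anything applied after the measured-and-reinjected mode acts as a further unitary channel (no more SI layers by hypothesis), and unitaries preserve purity, $\gamma(\rho_{\mathrm{out}})$ equals the purity right after the SI layer; but since the theorem statement identifies $\rho_{\mathrm{out}}$ with the state after the last operation, it is cleanest to absorb any such trailing unitary into the ``rest'' factor and conclude directly.

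The main obstacle, such as it is, is purely bookkeeping: setting up the bipartite notation so that ``$\Pr[i]$ = probability of measuring $i$ photons on the intermediate state'' is manifestly the coefficient appearing in $\rho_{\mathrm{out}}$, and being careful that the decoherence kills exactly the off-diagonal (in measured-mode photon number) terms and nothing else. There is no hard estimate or inequality — once the post-SI state is written as the Fock-number-diagonal mixture $\sum_i \Pr[i]\,\ketbra{i}\otimes\ketbra{\phi_i}$, the result is immediate.
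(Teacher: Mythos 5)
Your proposal is correct and follows essentially the same route as the paper's proof: both write the post-injection state as a mixture of pure branches labelled by the measured photon number, and both observe that the cross terms in $\Tr[\rho_{\mathrm{out}}^2]$ vanish because branches with different photon counts in the measured mode are orthogonal (the paper phrases this via projectors $\Pi_i$ with $\Pi_i^\dagger \Pi_j = 0$, you via the orthogonality of $\ketbra{i}{i}$ on the measured mode). Your added remarks on the non-orthogonality of the conditional states $\ket{\phi_i}$, the $\Pr[i]=0$ edge case, and absorbing any trailing unitary are sound but not needed beyond what the paper already does.
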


The result of this theorem also stands for any injection function such that two different measurements implies the injection of orthogonal states. The proof is presented in \autoref{subchap:ProofTheoremPurityEvolution}. From this result, we derive in \autoref{cor:PurityEvolutionBound} a lower bound of the purity of a linear optical circuit based on the number of SI layers it contains.

\begin{restatable}{cor}{PurityEvolutionBound}\label{cor:PurityEvolutionBound}
     We consider a quantum circuit made of $m$ modes with an initial pure state with $n$ photons, and with $L$ SI layers separated by linear optical blocs. If each SI layer consists in measuring the number of photons $p \in [\![ 0, n ]\!]$ in one of the modes and re-injecting the state $\ket{p}$ in that same mode, then the purity of the final state is such that:
     \begin{equation}
         \gamma(\rho_{\mathrm{out}}) \geq \frac{1}{(n+1)^L}\, \textrm{.}
     \end{equation}
\end{restatable}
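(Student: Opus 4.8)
The plan is to prove \autoref{cor:PurityEvolutionBound} by induction on the number $L$ of state-injection layers, but tracking a structural invariant stronger than the purity itself, namely that the running state is a convex combination of a controlled number of pure states. The first ingredient is the observation --- already implicit in the proof of \autoref{thm:PurityEvolution} --- that an SI layer of the type considered here (measure the photon number of one fixed mode, re-inject $\ket{p}$ into it) acts on an arbitrary density matrix $\sigma$ of the $n$-photon space exactly as the pinching channel
\[
\mathcal{E}(\sigma) \;=\; \sum_{p=0}^{n}\Pi_p\,\sigma\,\Pi_p\,,
\]
where $\Pi_p$ is the orthogonal projector onto the subspace in which the measured mode contains exactly $p$ photons. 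Indeed, the state conditioned on outcome $p$ is $\Pi_p\sigma\Pi_p/\Tr[\Pi_p\sigma]$, which already has the measured mode in $\ket{p}$, so re-preparing that mode in $\ket{p}$ changes nothing; averaging over outcomes weighted by their probabilities yields $\mathcal{E}$. What makes the argument work is that on the $n$-photon space $\sum_{p=0}^{n}\Pi_p=\mathds{1}$, i.e. this measurement has at most $n+1$ outcomes, and that this holds whatever single mode any given SI layer measures.

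Next I would isolate the key lemma: if $\sigma=\sum_{k=1}^{N}w_k\ketbra{\psi_k}$ is a convex combination of $N$ pure states, then $\mathcal{E}(\sigma)$ is a convex combination of at most $N(n+1)$ pure states. This is immediate from $\mathcal{E}(\sigma)=\sum_{k,p}\big(w_k\,\|\Pi_p\ket{\psi_k}\|^2\big)\,\tfrac{\Pi_p\ketbra{\psi_k}\Pi_p}{\|\Pi_p\ket{\psi_k}\|^2}$: the coefficients are nonnegative and sum to $\sum_k w_k\sum_p\|\Pi_p\ket{\psi_k}\|^2=\sum_k w_k=1$, each surviving term is a pure state, and there are $N(n+1)$ index pairs $(k,p)$. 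Moreover the interleaved linear-optical blocks are unitaries, and a unitary maps each $\ket{\psi_k}$ to a pure state, hence sends a convex combination of $N$ pure states to a convex combination of $N$ pure states; in particular it does not degrade this structure and preserves purity.

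The induction is then routine. Writing $\rho^{(\ell)}$ for the state immediately after the $\ell$-th SI layer, $\rho^{(0)}$ --- the input state, possibly evolved through an initial linear-optical block --- is pure, hence a combination of one pure state; by the lemma together with unitary invariance, $\rho^{(\ell)}$ is a combination of at most $(n+1)^{\ell}$ pure states, so $\rho_{\mathrm{out}}$, which equals $\rho^{(L)}$ up to a final unitary, is a combination of at most $(n+1)^{L}$ pure states. To conclude I would invoke the elementary bound that any $\sigma=\sum_{k=1}^{N}w_k\ketbra{\psi_k}$ satisfies $\Tr[\sigma^2]=\sum_{k,l}w_kw_l|\langle\psi_k|\psi_l\rangle|^2\ge\sum_k w_k^2\ge\tfrac1N\big(\sum_k w_k\big)^2=\tfrac1N$ --- discard the off-diagonal terms, then apply Cauchy--Schwarz --- applied with $N=(n+1)^{L}$, which gives $\gamma(\rho_{\mathrm{out}})\ge(n+1)^{-L}$.

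The only genuinely delicate step is the first one: one must argue carefully that, although an SI layer involves a free choice of Fock state to re-inject, in this $f_1=\mathrm{id}$ case it creates no more branching than the underlying single-mode photon-number measurement --- hence a factor $n+1$ per layer and nothing larger --- and that the subsequent linear-optical blocks, being unitary, neither create nor destroy the few-pure-states structure. Everything after that is bookkeeping. As a remark not needed for the bound, within a single layer the (at most) $n+1$ new pure states actually land in mutually orthogonal subspaces, which is precisely why \autoref{thm:PurityEvolution} is an equality rather than an inequality; one could propagate this orthogonality as well, but the crude count above already suffices.
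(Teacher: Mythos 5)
Your proof is correct, and it reaches the bound by a genuinely different (and in one respect more careful) route than the paper. The paper's proof sets up an inductive relation on the post-injection states and derives the closed form $\gamma(\rho_{\mathrm{out}})=\prod_{l=1}^{L}\big(\sum_{i=0}^{n}\Pr[l,i]^{2}\big)$ (\autoref{eq:gammaRhoFromSum}), then lower-bounds each factor by $1/(n+1)$ exactly as you do via Cauchy--Schwarz on the outcome probabilities. You instead track an ensemble-size invariant: each SI layer of this type acts as the pinching $\sigma\mapsto\sum_{p}\Pi_{p}\sigma\Pi_{p}$ with at most $n+1$ projectors, the interleaved unitaries preserve the pure-state count, so $\rho_{\mathrm{out}}$ is a mixture of at most $N=(n+1)^{L}$ pure states, and $\Tr[\sigma^{2}]\ge\sum_{k}w_{k}^{2}\ge 1/N$ finishes the argument. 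The combinatorial core ($n+1$ branches per layer, Cauchy--Schwarz over the weights) is the same, but your version buys robustness: the paper's product formula, read as an exact equality, implicitly assumes that the outcome statistics of successive layers factorize --- equivalently that all $(n+1)^{L}$ branch histories stay mutually orthogonal --- which need not hold once intermediate unitaries recombine branches produced at earlier layers (only the within-layer branches are guaranteed orthogonal). Your chain of inequalities never needs this, since the cross terms $w_{k}w_{l}\,|\langle\psi_{k}|\psi_{l}\rangle|^{2}$ are simply discarded as nonnegative; as a bonus, the same argument applies verbatim to any injection rule that re-prepares the measured mode in an outcome-dependent pure state, orthogonal or not. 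What you give up is the finer, equality-type information about the purity that the paper is after in \autoref{thm:PurityEvolution}; for the stated lower bound, nothing is lost.
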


The proof is presented in \autoref{subchap:ProofCorWorstScenario}. In \autoref{cor:PurityEvolutionBound}, we consider the worst case scenario where each measurement outcome is as likely, resulting in this inverse-exponential lower bound in the purity. But if one has prior knowledge about which outcomes are more likely to occur, this lower bound may be tightened. In particular, when considering a number of modes much greater than the number of photons, one enters the so-called \textit{no-collision regime}, where the probability of measuring more than one photon is negligible. This phenomenon may be quantified using the Boson Birthday Bound introduced in \cite{aaronson_computational_2011}, and doing so, we get the following.

\begin{restatable}{cor}{PurityNoCollision}\label{cor:PurityNoCollision}
     We consider a quantum circuit made of $m$ modes with an initial pure state with $n$ photons, and with $L$ SI layers separated by linear optical blocs. We again consider that each SI layer consists in measuring the number of photons $p \in [\![ 0, n ]\!]$ in one of the modes and re-injecting the state $\ket{p}$ in that same mode.
     
     If $m > 2n^2$, and if the linear optical blocks are considered to each be Haar distributed in the single-photon subspace, then the purity of the final state is such that:
     \begin{equation}
         \mathbb{E}_{U \in \mathcal{H}_{m,m}} [\gamma(\rho_{\mathrm{out}})] \geq \left(\frac{m-2n^2}{\sqrt{2}m}\right)^{2 L}\,.
     \end{equation}
\end{restatable}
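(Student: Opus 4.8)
The plan is to lower-bound the purity by (a power of) the probability that \emph{all $L$ measurements are collision-free}, and then control that probability one layer at a time via the Boson Birthday Bound.

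\textbf{Step 1: from purity to a collision-free probability.} Building on the structure of $\rho_{\mathrm{out}}$ established in the proof of \autoref{cor:PurityEvolutionBound}, write $\vec\imath=(i_1,\dots,i_L)$ for the tuple of outcomes of the $L$ single-mode measurements and $\Pr[\vec\imath]=\prod_{\ell=1}^L\Pr[i_\ell\mid i_1,\dots,i_{\ell-1}]$ for their joint probability. Then $\rho_{\mathrm{out}}=\sum_{\vec\imath}\Pr[\vec\imath]\,\ketbra{\eta_{\vec\imath}}$, where each $\ket{\eta_{\vec\imath}}$ is a unit vector whose $\ell$-th measured mode carries exactly $i_\ell$ photons. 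Expanding $\Tr[\rho_{\mathrm{out}}^2]$ and keeping only the diagonal terms (all off-diagonal terms are squared overlaps, hence nonnegative) gives $\gamma(\rho_{\mathrm{out}})\ge\sum_{\vec\imath}\Pr[\vec\imath]^2$. Restricting the sum to the $2^L$ tuples $\vec\imath\in\{0,1\}^L$ and applying Cauchy--Schwarz on that set,
\[
\gamma(\rho_{\mathrm{out}})\;\ge\;\sum_{\vec\imath\in\{0,1\}^L}\Pr[\vec\imath]^2\;\ge\;\frac{1}{2^L}\Bigl(\sum_{\vec\imath\in\{0,1\}^L}\Pr[\vec\imath]\Bigr)^{2}\;=\;\frac{1}{2^L}\,\Pr\bigl[\text{every measured mode sees}\le1\text{ photon}\bigr]^{2}.
\]
Taking the expectation over the Haar blocks and using $\E[X^2]\ge(\E X)^2$, it then suffices to show $\E_U\bigl[\Pr[\text{every measured mode sees}\le1\text{ photon}]\bigr]\ge\bigl(1-\tfrac{2n^2}{m}\bigr)^L$.

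\textbf{Step 2: a single-layer collision bound, uniform in the input.} The key fact is that for \emph{any} $n$-photon state $\sigma$ (pure or mixed) fed into a Haar-random $m$-mode linear-optical block $U$, the probability of finding $\ge2$ photons in a prescribed mode of $U\sigma U^\dagger$ is, in expectation over $U$, at most $\tfrac{2n^2}{m}$ (using $m>2n^2$). Indeed, the $n$-boson representation of the unitary group is irreducible, so Schur's lemma gives $\E_U[U^\dagger\Pi U]=\tfrac{\Tr\Pi}{d_n}\Id$ for the projector $\Pi$ onto Fock states with $\ge2$ photons in that mode; hence the expected per-mode collision probability equals the fixed ratio $\tfrac{\Tr\Pi}{d_n}$, \emph{independent of} $\sigma$, and is at most the expected probability of a collision in some mode, which the Boson Birthday Bound of \cite{aaronson_computational_2011} bounds by $\tfrac{2n^2}{m}$. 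This holds even though after the first injection the state is mixed and may already carry collisions in the unmeasured modes.

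\textbf{Step 3: peel off the layers.} Condition on the entire history (all earlier blocks and all earlier outcomes) just before the $\ell$-th linear-optical block. This history fixes the incoming $n$-photon state but leaves the $\ell$-th block Haar-distributed and independent, so by Step 2 the conditional probability that the $\ell$-th measurement yields $\le1$ photon is $\ge1-\tfrac{2n^2}{m}$ regardless of the history. A tower-of-conditional-expectations argument then multiplies these $L$ bounds:
\[
\E_U\bigl[\Pr[\text{every measured mode sees}\le1\text{ photon}]\bigr]\;\ge\;\Bigl(1-\tfrac{2n^2}{m}\Bigr)^{L}.
\]
Combining with Step 1 yields $\E_U[\gamma(\rho_{\mathrm{out}})]\ge\tfrac{1}{2^L}\bigl(\tfrac{m-2n^2}{m}\bigr)^{2L}=\bigl(\tfrac{m-2n^2}{\sqrt2\,m}\bigr)^{2L}$, and the hypothesis $m>2n^2$ is exactly what makes the base positive (and the statement non-vacuous).

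I expect the main obstacle to be Step 2: a verbatim citation of \cite{aaronson_computational_2011} only covers a collision-free Fock input, whereas here every layer after the first is fed a mixed state that generically has collisions on its unmeasured modes, so one must genuinely pass through the Schur's-lemma identity (which shows the expected per-mode collision probability is input-independent) rather than invoke the Boson Birthday Bound as a black box. The remaining pieces --- the diagonal-terms bound, the Cauchy--Schwarz and Jensen steps, and the interchange of summation over outcomes with the expectation over $U$ in the tower argument --- are routine.
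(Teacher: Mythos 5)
Your proof is correct and reaches the stated bound through the same three ingredients as the paper --- the Boson Birthday Bound, a Cauchy--Schwarz step converting a probability bound into a bound on a sum of squared probabilities, and Jensen's inequality to pull the Haar expectation inside a square --- but the packaging is genuinely different and in two places more careful. The paper starts from the exact factorization $\gamma(\rho_{\mathrm{out}})=\prod_{l}\sum_i\Pr[l,i]^2$ inherited from the proof of \autoref{cor:PurityEvolutionBound}, bounds $\mathbb{E}[\Pr[l,0]+\Pr[l,1]]\ge (m-2n^2)/m$ layer by layer via \autoref{thm:BosonBirthdayBound}, and multiplies the per-layer expectations. You instead use only the one-sided diagonal bound $\gamma(\rho_{\mathrm{out}})\ge\sum_{\vec\imath}\Pr[\vec\imath]^2$ over joint outcome histories, a single global Cauchy--Schwarz over the $2^L$ collision-free histories, and a tower of conditional expectations to peel off the layers; the arithmetic agrees, since $2^{-L}\big((m-2n^2)/m\big)^{2L}=\prod_l 2\big((m-2n^2)/(2m)\big)^2$. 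What your route buys is (i) independence from the exact product formula for the purity (you only need a lower bound, so the unexamined cross terms are harmless), and (ii) the Schur's-lemma identity $\mathbb{E}_U[\varphi(U)^\dagger\Pi\,\varphi(U)]=(\Tr\Pi/d_n)\,\Id$, which shows the expected per-mode collision probability is independent of the input state. This last point closes a gap the paper leaves implicit: the Boson Birthday Bound is stated for the standard collision-free Fock input, whereas the states entering layers $2,\dots,L$ are mixed and may already carry collisions; your identity also makes the interchange of the product over layers with the expectation over the independent Haar blocks legitimate, which the paper asserts without justification. The only cosmetic quibble is that the hypothesis $m>2n^2$ is not needed for the per-layer bound itself, only to make it (and the final statement) nontrivial, as you note.
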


The proof is presented in \autoref{subchap:ProofCorNoCollision}. In this setting, we notice that the purity lower bound decreases at a lower rate than the one presented in \autoref{cor:PurityEvolutionBound}. Those results could easily be adapted to more complex injection functions.

Using the results one the controllability and the purity of the final state, one can choose a particular number of SI layers and particular injection functions according to the desired specifications. In the following Section, we will discuss the impact of the state purity on the final quantum output model. 

    \subsection{Purity and Distinguishability}\label{subsec:PurityDistinguishability}

We define the concept of distinguishability of the quantum model. In what follows, we denote by $\mathrm{Dens}(d)$ the set of $d \times d$ density matrices.

\begin{definition}[Quantum models Distinguishability]
    Given two density matrices $\rho,\sigma \in \mathrm{Dens}(d)$, we introduce the following measure of distinguishability between the two states:
    \begin{equation}\label{eq:def-distinguishability-of-pair}
        \mathcal{D}(\rho, \sigma) := \max_{O} \big|\Tr[O \rho] - \Tr[O \sigma]\big|\,,
    \end{equation}
    where the maximum is taken over all observable $O \in \mathrm{Herm}(d)$ such that $||O||_{\infty} \leq 1$.

    Note that the quantity $\frac{1}{2}\mathcal{D}(\rho,\sigma)$ is equal to the \emph{trace-distance} $D_{\mathrm{tr}}(\rho,\sigma):=\frac{1}{2}\norm{\rho - \sigma}_1$ (see \autoref{chap:ProofTheoremIndiscernability} for details).

    Given now an arbitrary subset $\mathcal{S} \subseteq \mathrm{Dens}(d)$ of density matrices, we define the associated distinguishability measure over $\mathcal{S}$:
    \begin{equation}\label{eq:def-distinguishability-of-subset}
        \mathcal{D}(\mathcal{S}) := \max_{\rho, \sigma \in \mathcal{S}} \mathcal{D}(\rho,\sigma)\,.
    \end{equation}

    Lastly, given a \textit{quantum model} consisting of the output $\rho_{\mathrm{out}}(\bm{\theta})$ of a parametrized quantum circuit, we define the \textit{distinguishability of the quantum model} as:
    \begin{equation}\label{eq:def-distinguishability-of-model}
        \mathcal{D}( \rho_{\mathrm{out}}(\bm{\theta}) ) := \mathcal{D}\Big( \mathcal{S}\!=\!\left\{ 
\rho_{\mathrm{out}}(\bm{\theta}) \; | \; \bm{\theta} \in \Theta \right\} \Big)\,.
    \end{equation}
\end{definition}

A quantum model's distinguishability is an important metric to consider, as a low value would indicate a need for a high number of shots of the whole quantum experiment in order to resolve to sufficient precision the value of the observable that one wishes to estimate.

Intuitively, the connection between the distinguishability of two states $\mathcal{D}(\rho,\sigma)$ and their purity is clear: If two states are both too impure, they must both be relatively close to the maximally-mixed state, and therefore they should be relatively close to each other in some measure of distinguishability. This intuition is quantified in the following result:  


\begin{restatable}{thm}{Distinguishability}
\label{thm:Distinguishability}
    If $\mathcal{S} \subseteq \mathrm{Dens}(d)$ is a subset of $d \times d$ density matrices that are bounded in purity by $\gamma \in [1/d,\ 1]$, i.e., for all $\rho \in \mathcal{S}$,
    \begin{equation}
        \Tr[\rho^2] \leq \gamma,
    \end{equation}
     then the distinguishability $\mathcal{D}(\mathcal{S})$ of this subset (see \autoref{eq:def-distinguishability-of-subset}) satisfies
     \begin{equation}
         \mathcal{D}(\mathcal{S}) \leq 2\sqrt{d}\,\,\sqrt{ \gamma - \frac{1}{d} }\,.
     \end{equation}

\end{restatable}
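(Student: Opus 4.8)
The plan is to reduce the distinguishability bound to a statement about the Hilbert–Schmidt (Frobenius) norm of the difference $\rho - \sigma$, and then to control that Frobenius norm purely in terms of the purity bound $\gamma$. First I would recall that for any Hermitian observable $O$ with $\|O\|_\infty \le 1$, we have $|\Tr[O(\rho-\sigma)]| \le \|O\|_\infty \|\rho-\sigma\|_1 \le \|\rho-\sigma\|_1$ by Hölder's inequality, so $\mathcal{D}(\rho,\sigma) \le \|\rho-\sigma\|_1$ (this is the trace-distance relation already noted in the excerpt). Then, since $\rho-\sigma$ is a Hermitian matrix of size $d \times d$ and rank at most $d$, the Cauchy–Schwarz inequality relating the Schatten-$1$ and Schatten-$2$ norms gives $\|\rho-\sigma\|_1 \le \sqrt{d}\,\|\rho-\sigma\|_2$, where $\|\cdot\|_2$ is the Hilbert–Schmidt norm. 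So it suffices to show $\|\rho-\sigma\|_2 \le 2\sqrt{\gamma - 1/d}$ for all $\rho,\sigma \in \mathcal{S}$, and then take the max over $\mathcal{S}$.

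The key observation for the last step is that each $\rho \in \mathcal{S}$ lies in a ball around the maximally mixed state $\Id/d$ in Hilbert–Schmidt norm. Indeed, write $\rho = \Id/d + A$ where $A$ is traceless Hermitian; then $\Tr[\rho^2] = \Tr[(\Id/d)^2] + 2\Tr[(\Id/d)A] + \Tr[A^2] = 1/d + 0 + \|A\|_2^2$, using $\Tr[A]=0$. Hence $\|\rho - \Id/d\|_2^2 = \Tr[\rho^2] - 1/d \le \gamma - 1/d$, i.e. $\|\rho - \Id/d\|_2 \le \sqrt{\gamma - 1/d}$. By the triangle inequality, for any $\rho,\sigma \in \mathcal{S}$,
\begin{equation}
    \|\rho - \sigma\|_2 \le \|\rho - \tfrac{\Id}{d}\|_2 + \|\tfrac{\Id}{d} - \sigma\|_2 \le 2\sqrt{\gamma - \tfrac{1}{d}}\,.
\end{equation}
Combining the three inequalities, $\mathcal{D}(\rho,\sigma) \le \|\rho-\sigma\|_1 \le \sqrt{d}\,\|\rho-\sigma\|_2 \le 2\sqrt{d}\sqrt{\gamma - 1/d}$, and taking the maximum over $\rho,\sigma \in \mathcal{S}$ yields $\mathcal{D}(\mathcal{S}) \le 2\sqrt{d}\sqrt{\gamma - 1/d}$.

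There is no real obstacle here; all three steps are standard norm inequalities. The only point requiring a little care is the rank bound in $\|X\|_1 \le \sqrt{\operatorname{rank}(X)}\,\|X\|_2 \le \sqrt{d}\,\|X\|_2$ — one should note it holds with $\sqrt{d}$ unconditionally since $X = \rho - \sigma$ acts on a $d$-dimensional space, so no sharper rank estimate is needed. It is perhaps worth remarking that the bound is tight in the regime it is meant for: taking $\rho$ and $\sigma$ to be diagonal perturbations of $\Id/d$ in opposite directions saturates all the inequalities up to the rank/Cauchy–Schwarz step, so the $\sqrt{d}$ factor is genuinely present. I would also double-check the edge cases $\gamma = 1/d$ (forcing $\mathcal{S} = \{\Id/d\}$, so $\mathcal{D}(\mathcal{S}) = 0$, consistent) and $\gamma = 1$ (giving the trivial bound $2\sqrt{d}\sqrt{1-1/d} \le 2\sqrt{d}$, weaker than the universal bound $\mathcal{D} \le 2$ but still valid).
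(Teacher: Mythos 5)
Your proof is correct and follows essentially the same route as the paper's: Hölder to get $\mathcal{D}(\rho,\sigma)\le\norm{\rho-\sigma}_1$, the $\norm{\cdot}_1\le\sqrt{d}\,\norm{\cdot}_2$ bound, and then controlling $\norm{\rho-\tfrac{\Id}{d}}_2^2=\Tr[\rho^2]-\tfrac1d$ via the triangle inequality through the maximally mixed state. Your traceless-decomposition computation of $\Tr[\rho^2]=1/d+\norm{A}_2^2$ is just an algebraic rephrasing of the Pythagorean-theorem step the paper uses (the vanishing cross term $\Tr[A]/d=0$ is exactly the orthogonality $\langle \tfrac{\Id}{d},\,\rho-\tfrac{\Id}{d}\rangle=0$), so no substantive difference.
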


The proof is given in \autoref{chap:ProofTheoremIndiscernability}.

\section{Probability Estimation Problem}\label{sec:Proba_Estimation}

In this Section, we argue on how linear optics boosted with photonic SI allows one to generate output probabilities that are believed hard to solve classically. Similar task has been tackled using ALO in \cite{chabaud_quantum_2021}. Here, we show how our proposal, which is less experimentally challenging, does offer a similar kind of advantage.

While Boson Sampling is -- under widely believed complexity theoretic assumptions --  a task intractable for a classical computer \cite{aaronson_computational_2011}, estimating a single output probability up to an inverse polynomial additive error is doable in polynomial time (in the number of involved photons) via Gurvits's algorithm \cite{aaronson_generalizing_2012}. This requires QML algorithms relying on Boson Sampling probabilities a bit of work to escape this. It was shown in many ways how a Boson Sampling-like architecture can be made universal for quantum computation \cite{knill_scheme_2001,bartolucci_fusion-based_2023,chabaud_quantum_2021}. We propose an architecture similar to \cite{chabaud_quantum_2021}, where Boson Sampling was boosted with \emph{feedforward} of measurement result. This architecture escapes Gurvits's algorithm for probability estimation, depending on both the number of feedforwards and the number of photon measured throughout the computation (see \autoref{fig:aM}). Incidentally, this scheme becomes universal when both quantities are large (in fact, at least linear) compared to the number of input modes. However, this model comes with strong technical requirements when it comes to physical implementation. Indeed, parameterizing unitaries depending on previous measurement result, \emph{i.e.,} on the fly with respect to the ongoing computation, is out of reach of current technologies, due, in particular, to the heavy requirements for high speed electronics and information processing as explained in \autoref{sec:State_Injection}. This comes at the cost of the universality of the scheme, as we leave the question of the computational universality of the scheme open.

\begin{figure}[h!t]
\centering
\begin{subfigure}{.5\textwidth}
    \centering
    \includegraphics[align=c,width=.9\linewidth]{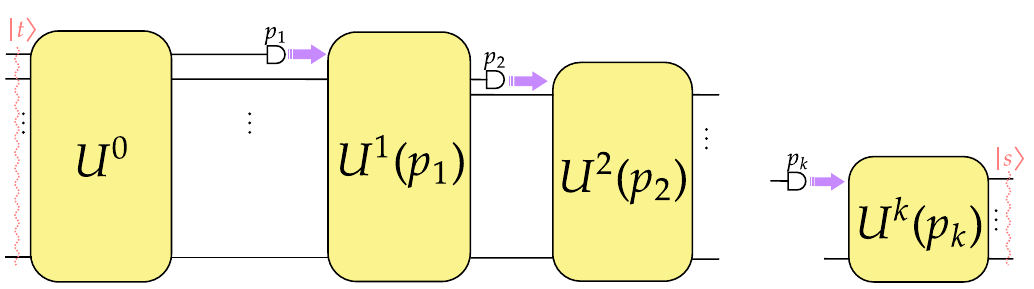}
    \caption{A cascade of $k$ linear optical interferometer, where each of the $U^l$, $1 \leq l \leq k$ is parameterized by the outcome of the measurement of the first mode of the previous interferometer $U^{l-1}$. 
    }
    \label{fig:aM}
\end{subfigure}%
\hspace*{.2in}
\begin{subfigure}{.5\textwidth}
      \centering
    \includegraphics[align=c,width=0.9\linewidth]{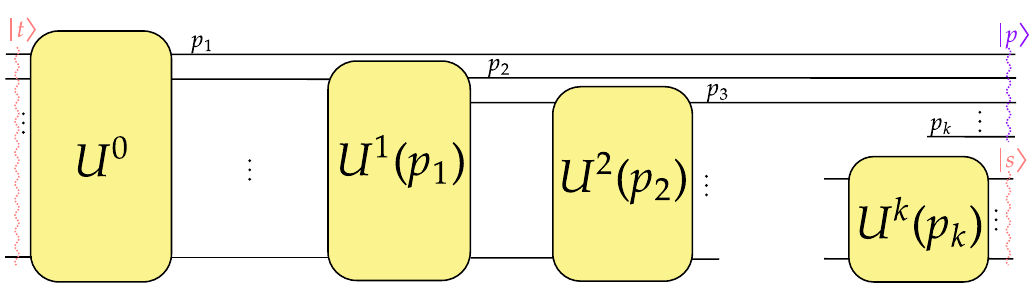}
    \caption{\emph{Equivalent model}, where no feedforward is required. We consider the same unitaries, and we post-select measurement of the modes previously used for adaptivity.}
    \vfill
    \label{fig:aMEM}
\end{subfigure}
\caption{Feedforward architecture (\autoref{fig:aM}) and its equivalent model (\autoref{fig:aMEM}). }
\label{fig:aSI}
\end{figure}

For a computational model, we refer its \emph{equivalent model} (see in \autoref{fig:aSI} as the one describing the transition amplitudes of the original computational model by a (or possibly a sum of) transition amplitude arising from a Boson Sampling experiment. Accordingly, the \emph{equivalent unitary} is the unitary matrix describing that particular experiment state. Indeed, real-life experiments are to be conducted by implementing the actual computational model; the equivalent model is of great help to grasp the hardness of its classical simulation.

\begin{figure}[h!t]
\centering
\begin{subfigure}{.5\textwidth}
    \centering
    \includegraphics[align=c,width=.9\linewidth]{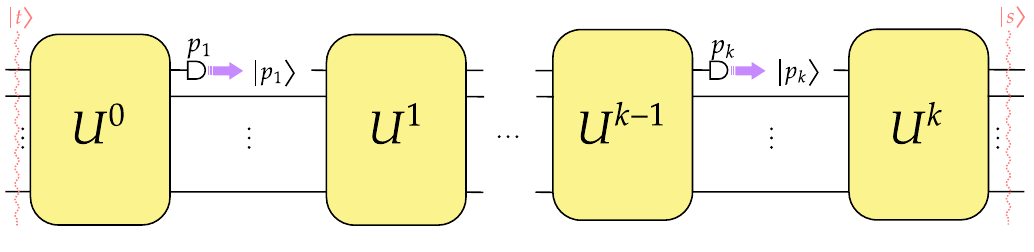}
    \caption{Probability estimation using state injection, with specific injection functions. 
    }
    \label{fig:ProbaStateInjection}
\end{subfigure}%
\hspace*{.2in}
\begin{subfigure}{.5\textwidth}
    \centering
    \includegraphics[align=c,width=.9\linewidth]{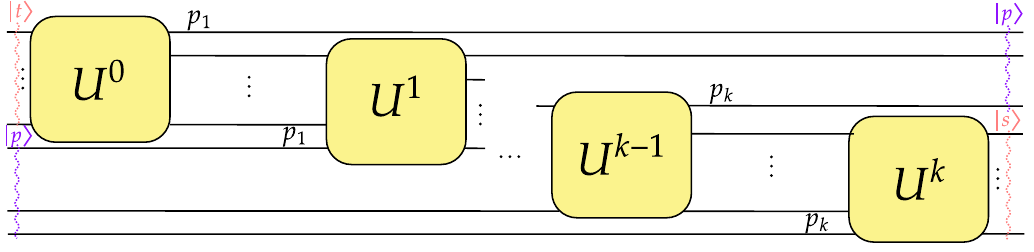}
    \caption{Equivalent model of SI presented in \autoref{fig:Experimental_Comparison_StateInjection}.}
    \vfill
    \label{fig:EqStateInj}
\end{subfigure}
\caption{Feedforward architecture (\autoref{fig:aM}) and its equivalent model (\autoref{fig:aMEM}). }
\label{fig:aSIEM}
\end{figure}

Writing $\Id_l$ the identity on $SU(l)$, the equivalent unitary -- the one described in \autoref{fig:aSIEM} -- of a computation with $k$ state injections is of the form
\begin{equation}
    \tilde U = \prod_{l=0}^k(\Id_l \otimes U^l \otimes \Id_{k-l}),
\end{equation}
where the $U^l \in SU(m)$ for all $0\leq l\leq k$ and $\tilde U \in SU(m+k)$. Thus, the output state reads
\begin{equation}
    \rho_{\mathrm{out}} = \Tr_k \lbr \tilde U \lpr\ket{\t}\bra{\t} \otimes \ket{\p}\bra{\p}\rpr\tilde U^{\dagger} \lpr\Id_{m-k}\otimes \ket{\p}\bra{\p}\rpr \rbr,
\end{equation}
where the $k$ last modes are traced out.
We write $\pr[\t]{\p, \s}$, for $\p \in \Phi_{k,r}$, where $r\in\mathbb N$ is the total number of measured photons and $\bm s,\bm t \in \Phi_{m,n}$ the probability of measuring output occupancy $\s$ upon obtaining the measurement pattern $\p$; indeed we get from the equivalent model that
\begin{equation}
    \pr[\t]{\p, \s} = \bra{(\s, \p)} \tilde U \ket{(\p, \t)} = \frac{1}{(\p!)^2\s!\t!}\lb \per{I_{(\s,\p)}^{\dagger}\tilde U I_{(\p,\t)}}\rb^2.
\end{equation}
Thus, the probability of measuring $\s$ at the output of the interferometer is obtained by summing over all patterns, namely
\begin{equation}
    \pr[\t]{\s} = \sum_{\p \in \Phi_{k,r}} \pr[\t]{\p, \s} = \frac{1}{\s!\t!} \sum_{\p \in \Phi_{k,r}}  \frac{1}{(\p!)^2}\lb \per{I_{(\s,\p)}^{\dagger}\tilde U I_{(\p,\t)}}\rb^2,
\end{equation}
where $\Phi_{k,r}$ is the list of all possible patterns of $k$ measurements where a total of $r$ photons were measured, \emph{i.e.,} the set of all $k$-tuples of integers $(n_1, \cdots, n_k)$ such that $\sum_i n_i = r$. Recall that $|\Phi_{k,r}| = \binom{k+r-1}{r-1}$. Given an $n \times n$ matrix $A$, Gurvits's algorithm computes an estimate of $\per{A}$ to  within additive precision $\pm \varepsilon \|A\|^n$ in time $O(n^2 \varepsilon^{-2})$. Therefore, as $\tilde U$ is unitary, one can compute an estimate of $\pr[\t]{\s}$ in polynomial time up to inverse polynomial precision provided $|\Phi_{k,r}| = O(poly(m))$. 
On the contrary, one will always be able to compute an estimate (again, up to inverse polynomial precision) of the output probability of a certain state by sampling the linear interferometer.

\begin{table}[h!t]
\begin{minipage}[b]{.45\textwidth}
\centering
\begin{tabular}{cc|ccc|}
\cline{3-5}
 &  & \multicolumn{3}{c|}{$k$} \\ \cline{3-5} 
 &  & \multicolumn{1}{c|}{$\ \bigo 1\ $} & \multicolumn{1}{c|}{$\bigo{\log m}$} & $\bigo{m}$ \\ \hline
\multicolumn{1}{|c|}{} & $\bigo{1}$ & \multicolumn{1}{c|}{\cmark} & \multicolumn{1}{c|}{\cmark} & \cmark \\ \cline{2-5} 
\multicolumn{1}{|c|}{} & $\bigo{\log m}$ & \multicolumn{1}{c|}{\cmark} & \multicolumn{1}{c|}{\cmark} & \xmark \\ \cline{2-5} 
\multicolumn{1}{|c|}{} & $\bigo{m}$ & \multicolumn{1}{c|}{\cmark} & \multicolumn{1}{c|}{\xmark} & \xmark \\ \cline{2-5} 
\multicolumn{1}{|c|}{} & $\bigo{m\log m}$ & \multicolumn{1}{c|}{\cellcolor[HTML]{C0C0C0}{\color[HTML]{C0C0C0} }} & \multicolumn{1}{c|}{\xmark} & \xmark \\ \cline{2-5} 
\multicolumn{1}{|c|}{\multirow{-5}{*}{$r$}} & $\bigo{m^2}$ & \multicolumn{1}{c|}{\cellcolor[HTML]{C0C0C0}{\color[HTML]{C0C0C0} }} & \multicolumn{1}{c|}{\cellcolor[HTML]{C0C0C0}{\color[HTML]{C0C0C0} }} & \xmark \\ \hline
\end{tabular}
\vspace{.4em}
\caption{Efficient classical output probability estimation regimes of the SI scheme. The green checkmarks indicate that the output probabilities can be estimated efficiently classically while the red crosses indicate no efficient classical algorithm is known. The gray cells indicate non reachable regimes assuming no concentration of the photons.}
\label{tbl:simulationRegimes}
\end{minipage}
\hspace{.08\textwidth}
\begin{minipage}[b]{.45\textwidth}
\centering
\begin{tabular}{cc|ccc|}
\cline{3-5}
 &  & \multicolumn{3}{c|}{$k$} \\ \cline{3-5} 
 &  & \multicolumn{1}{c|}{$\bigo 1$} & \multicolumn{1}{c|}{$\bigo{\log m}$} & $\bigo{m}$ \\ \hline
\multicolumn{1}{|c|}{\multirow{3}{*}{$r$}} & $\bigo{1}$ & \multicolumn{1}{c|}{\cmark} & \multicolumn{1}{c|}{\cmark} & \cmark \\ \cline{2-5} 
\multicolumn{1}{|c|}{} & $\bigo{\log m}$ & \multicolumn{1}{c|}{\cmark} & \multicolumn{1}{c|}{\cmark} & \xmark \\ \cline{2-5} 
\multicolumn{1}{|c|}{} & $\bigo{m}$ & \multicolumn{1}{c|}{\cmark} & \multicolumn{1}{c|}{\xmark} & \xmark \\ \hline
\end{tabular}
\vspace{.4em}
\caption{Duplicate of \cite[Table 2]{chabaud_quantum_2021}. It describes the simulability regimes for probability estimation in the setting of ALO.  The green checkmarks indicate that the output probabilities can be estimated efficiently classically while the red crosses indicate no efficient classical algorithm is known. \\}
\label{tbl:simulationRegimesALO}
\end{minipage}
\end{table}

The regimes where this technique can be efficiently simulated classically are summarized in \autoref{tbl:simulationRegimes}. We recall in \autoref{tbl:simulationRegimesALO} the regimes in which probability estimation can be efficiently done classically in the ALO picture introduced in \cite{chabaud_quantum_2021}. We remark that our scheme presents the same potential of quantum speedup in the settings where $r$, the total number of photon measured throughout the computation, is smaller than $\bigo m$. But as we offer the possibility to inject photon states anywhere in the computation, the total number of photons to be considered in our scheme goes up to $O(m^2)$. Since we start with the same input state as \cite{chabaud_quantum_2021}, some regimes (grayed in \autoref{tbl:simulationRegimes}) are out of reach. Nonetheless, our proposal gives access to new computing regimes where classical simulation is not expected to be efficient. In addition, using the SI scheme allows us to start with a lower number of initial photons $n_0$. In the case of ALO, $n_0$ must be greater or equal to $r$, which is not the case for SI. As explained in \autoref{sec:State_Injection}, this could help to reach experimentally a regime where probability estimation is not believed to be efficiently classically estimated.
\section{Conclusion}\label{sec:Conclusion}

In this work, we propose a new protocol called State Injection. This new adaptive operation allows us to increase the controllability and thus the expressivity of quantum models in comparison with Linear Optics circuits. We show that this method could be easier to implement than the ALO scheme. While the question of what set of states can reach the quantum models made of linear optics and SI operations is still open, we propose a study of the controllability of those states, and we offered theoretical tools to consider the effect of those new operations on their controllability and purity. The ALO scheme allows us to design universal architecture \cite{knill_scheme_2001}, and future works should investigate if a universal scheme can be designed based on SI.

Finally, we discussed the opportunity of using SI to design subspace-preserving circuits. We gave an example of injection functions that preserve the number of particles. Considering such subspace preserving variational circuits may offer theoretical guarantees on the trainability. Overall, this paper opens a new path to designing useful quantum algorithms using sub-universal circuits.
\section{Acknowledgment}

This work has been co-funded by the H2020-FETOPEN Grant PHOQUSING (GA no.: 899544) and the Naval Group Centre of Excellence for Information Human factors and Signature Management (CEMIS).
H.T. acknowledges co-funding by the European Commission as part of the EIC accelerator program under the grant agreement 190188855 for SEPOQC project, by the Horizon-CL4 program under the grant agreement 101135288 for EPIQUE project, and by the CIFRE grant n°2023/1746.
The authors warmly thank Beatrice Polacchi, Jonas Landman, and Pierre-Emmanuel Emeriau for the fruitful discussions.

\section{Author Contributions}

L.M. initiated the project, proposed the state injection method and focused on the link with the purity of the state and its controllability. E.M. came up with the results on the connection between the final state purity and the quantum model distinguishability. H.T. proposed the idea of probability estimation using state injection, and did the proof with L.M., V.Y., and E.M. V.Y. initiated the project and contributed to the design of the state injection method. The project was supervised by U.C. and E.K..

\printbibliography 

@article{chabaud_quantum_2021,
  title        = {Quantum machine learning with adaptive linear optics},
  volume       = {5},
  url          = {https://doi.org/10.22331%2Fq-2021-07-05-496},
  doi          = {10.22331/q-2021-07-05-496},
  pages        = {496},
  journaltitle = {Quantum},
  author       = {Chabaud, Ulysse and Markham, Damian and Sohbi, Adel},
  date         = {2021-07},
  note         = {Publisher: Verein zur Forderung des Open Access Publizierens in den Quantenwissenschaften}
}

@article{knill_quantum_2002,
  title    = {Quantum gates using linear optics and postselection},
  volume   = {66},
  url      = {https://link.aps.org/doi/10.1103/PhysRevA.66.052306},
  doi      = {10.1103/PhysRevA.66.052306},
  abstract = {Recently it was realized that linear optics and photodetectors with feedback can be used for theoretically efficient quantum information processing. The first of three steps toward efficient linear optics quantum computation is to design a simple postselected gate that implements a nonlinear phase shift on one mode. Here a computational strategy is given for finding postselected gates for bosonic qubits with helper photons. A more efficient conditional sign flip gate is obtained. What is the maximum efficiency for such gates? This question is posed and it is shown that the probability of success cannot be 1.},
  number   = {5},
  urldate  = {2023-11-27},
  journal  = {Physical Review A},
  author   = {Knill, E.},
  month    = nov,
  year     = {2002},
  note     = {Publisher: American Physical Society},
  pages    = {052306}
}

@misc{aaronson_generalizing_2012,
	title = {Generalizing and {Derandomizing} {Gurvits}'s {Approximation} {Algorithm} for the {Permanent}},
	url = {http://arxiv.org/abs/1212.0025},
	abstract = {Around 2002, Leonid Gurvits gave a striking randomized algorithm to approximate the permanent of an n*n matrix A. The algorithm runs in O(n{\textasciicircum}2/eps{\textasciicircum}2) time, and approximates Per(A) to within eps*{\textbar}{\textbar}A{\textbar}{\textbar}{\textasciicircum}n additive error. A major advantage of Gurvits's algorithm is that it works for arbitrary matrices, not just for nonnegative matrices. This makes it highly relevant to quantum optics, where the permanents of bounded-norm complex matrices play a central role. Indeed, the existence of Gurvits's algorithm is why, in their recent work on the hardness of quantum optics, Aaronson and Arkhipov (AA) had to talk about sampling problems rather than estimation problems. In this paper, we improve Gurvits's algorithm in two ways. First, using an idea from quantum optics, we generalize the algorithm so that it yields a better approximation when the matrix A has either repeated rows or repeated columns. Translating back to quantum optics, this lets us classically estimate the probability of any outcome of an AA-type experiment---even an outcome involving multiple photons "bunched" in the same mode---at least as well as that probability can be estimated by the experiment itself. (This does not, of course, let us solve the AA sampling problem.) It also yields a general upper bound on the probabilities of "bunched" outcomes, which resolves a conjecture of Gurvits and might be of independent physical interest. Second, we use eps-biased sets to derandomize Gurvits's algorithm, in the special case where the matrix A is nonnegative. More interestingly, we generalize the notion of eps-biased sets to the complex numbers, construct "complex eps-biased sets," then use those sets to derandomize even our generalization of Gurvits's algorithm to the multirow/multicolumn case (again for nonnegative A). Whether Gurvits's algorithm can be derandomized for general A remains an outstanding problem.},
	urldate = {2023-03-15},
	publisher = {arXiv},
	author = {Aaronson, Scott and Hance, Travis},
	month = dec,
	year = {2012},
	note = {arXiv:1212.0025 [quant-ph]},
	keywords = {Computer Science - Computational Complexity, Quantum Physics},
}

@article{knill_scheme_2001,
	title = {A scheme for efficient quantum computation with linear optics},
	volume = {409},
	copyright = {2001 Macmillan Magazines Ltd.},
	issn = {1476-4687},
	url = {https://www.nature.com/articles/35051009},
	doi = {10.1038/35051009},
	abstract = {Quantum computers promise to increase greatly the efficiency of solving problems such as factoring large integers, combinatorial optimization and quantum physics simulation. One of the greatest challenges now is to implement the basic quantum-computational elements in a physical system and to demonstrate that they can be reliably and scalably controlled. One of the earliest proposals for quantum computation is based on implementing a quantum bit with two optical modes containing one photon. The proposal is appealing because of the ease with which photon interference can be observed. Until now, it suffered from the requirement for non-linear couplings between optical modes containing few photons. Here we show that efficient quantum computation is possible using only beam splitters, phase shifters, single photon sources and photo-detectors. Our methods exploit feedback from photo-detectors and are robust against errors from photon loss and detector inefficiency. The basic elements are accessible to experimental investigation with current technology.},
	language = {en},
	number = {6816},
	urldate = {2024-07-17},
	journal = {Nature},
	author = {Knill, E. and Laflamme, R. and Milburn, G. J.},
	month = jan,
	year = {2001},
	note = {Publisher: Nature Publishing Group},
	keywords = {Humanities and Social Sciences, Science, multidisciplinary},
	pages = {46--52},
}

@article{bartolucci_fusion-based_2023,
	title = {Fusion-based quantum computation},
	volume = {14},
	copyright = {2023 The Author(s)},
	issn = {2041-1723},
	url = {https://www.nature.com/articles/s41467-023-36493-1},
	doi = {10.1038/s41467-023-36493-1},
	abstract = {The standard primitives of quantum computing include deterministic unitary entangling gates, which are not natural operations in many systems including photonics. Here, we present fusion-based quantum computation, a model for fault tolerant quantum computing constructed from physical primitives readily accessible in photonic systems. These are entangling measurements, called fusions, which are performed on the qubits of small constant sized entangled resource states. Probabilistic photonic gates as well as errors are directly dealt with by the quantum error correction protocol. We show that this computational model can achieve a higher threshold than schemes reported in literature. We present a ballistic scheme which can tolerate a 10.4\% probability of suffering photon loss in each fusion, which corresponds to a 2.7\% probability of loss of each individual photon. The architecture is also highly modular and has reduced classical processing requirements compared to previous photonic quantum computing architectures.},
	language = {en},
	number = {1},
	urldate = {2024-02-05},
	journal = {Nature Communications},
	author = {Bartolucci, Sara and Birchall, Patrick and Bombín, Hector and Cable, Hugo and Dawson, Chris and Gimeno-Segovia, Mercedes and Johnston, Eric and Kieling, Konrad and Nickerson, Naomi and Pant, Mihir and Pastawski, Fernando and Rudolph, Terry and Sparrow, Chris},
	month = feb,
	year = {2023},
	note = {Number: 1
Publisher: Nature Publishing Group},
	keywords = {Quantum information, Qubits},
	pages = {912},
}

@inbook{Gard_2015,
   title={An Introduction to Boson-Sampling},
   ISBN={9789814678704},
   url={http://dx.doi.org/10.1142/9789814678704_0008},
   DOI={10.1142/9789814678704_0008},
   booktitle={From Atomic to Mesoscale},
   publisher={WORLD SCIENTIFIC},
   author={Gard, Bryan T. and Motes, Keith R. and Olson, Jonathan P. and Rohde, Peter P. and Dowling, Jonathan P.},
   year={2015},
   month=jun, pages={167–192} }

@article{Bremner2010ClassicalSO,
  title={Classical simulation of commuting quantum computations implies collapse of the polynomial hierarchy},
  author={Michael J. Bremner and Richard Jozsa and Dan J. Shepherd},
  journal={Proceedings of the Royal Society A: Mathematical, Physical and Engineering Sciences},
  year={2010},
  volume={467},
  pages={459 - 472},
  url={https://api.semanticscholar.org/CorpusID:12301677}
}

@article{Zhong_2020,
   title={Quantum computational advantage using photons},
   volume={370},
   ISSN={1095-9203},
   url={http://dx.doi.org/10.1126/science.abe8770},
   DOI={10.1126/science.abe8770},
   number={6523},
   journal={Science},
   publisher={American Association for the Advancement of Science (AAAS)},
   author={Zhong, Han-Sen and Wang, Hui and Deng, Yu-Hao and Chen, Ming-Cheng and Peng, Li-Chao and Luo, Yi-Han and Qin, Jian and Wu, Dian and Ding, Xing and Hu, Yi and Hu, Peng and Yang, Xiao-Yan and Zhang, Wei-Jun and Li, Hao and Li, Yuxuan and Jiang, Xiao and Gan, Lin and Yang, Guangwen and You, Lixing and Wang, Zhen and Li, Li and Liu, Nai-Le and Lu, Chao-Yang and Pan, Jian-Wei},
   year={2020},
   month=dec, pages={1460–1463} }

@article{Preskill_2018,
   title={Quantum Computing in the NISQ era and beyond},
   volume={2},
   ISSN={2521-327X},
   url={http://dx.doi.org/10.22331/q-2018-08-06-79},
   DOI={10.22331/q-2018-08-06-79},
   journal={Quantum},
   publisher={Verein zur Forderung des Open Access Publizierens in den Quantenwissenschaften},
   author={Preskill, John},
   year={2018},
   month=aug, pages={79} }

@misc{polino2022photonicimplementationquantumgravity,
      title={Photonic Implementation of Quantum Gravity Simulator}, 
      author={Emanuele Polino and Beatrice Polacchi and Davide Poderini and Iris Agresti and Gonzalo Carvacho and Fabio Sciarrino and Andrea Di Biagio and Carlo Rovelli and Marios Christodoulou},
      year={2022},
      eprint={2207.01680},
      archivePrefix={arXiv},
      primaryClass={quant-ph},
      url={https://arxiv.org/abs/2207.01680}, 
}

@misc{steinbrecher2018,
      title={Quantum optical neural networks}, 
      author={Gregory R. Steinbrecher and Jonathan P. Olson and Dirk Englund and Jacques Carolan},
      year={2018},
      eprint={1808.10047},
      archivePrefix={arXiv},
      primaryClass={quant-ph},
      url={https://arxiv.org/abs/1808.10047}, 
}

@article{fu2023photonic,
  title={Photonic machine learning with on-chip diffractive optics},
  author={Fu, Tingzhao and Zang, Yubin and Huang, Yuyao and Du, Zhenmin and Huang, Honghao and Hu, Chengyang and Chen, Minghua and Yang, Sigang and Chen, Hongwei},
  journal={Nature Communications},
  volume={14},
  number={1},
  pages={70},
  year={2023},
  publisher={Nature Publishing Group UK London}
}

@article{taballione202320,
  title={20-mode universal quantum photonic processor},
  author={Taballione, Caterina and Anguita, Malaquias Correa and de Goede, Michiel and Venderbosch, Pim and Kassenberg, Ben and Snijders, Henk and Kannan, Narasimhan and Vleeshouwers, Ward L and Smith, Devin and Epping, J{\"o}rn P and others},
  journal={Quantum},
  volume={7},
  pages={1071},
  year={2023},
  publisher={Verein zur F{\"o}rderung des Open Access Publizierens in den Quantenwissenschaften}
}

@article{thomas2024noise,
  title={Noise Performance of On-Chip Nano-Mechanical Switches for Quantum Photonics Applications},
  author={Thomas, Rodrigo A and Qvotrup, Celeste and Liu, Zhe and Midolo, Leonardo},
  journal={Advanced Quantum Technologies},
  pages={2400012},
  year={2024},
  publisher={Wiley Online Library}
}

@article{memeo2024micro,
  title={Micro-opto-mechanical glass interferometer for megahertz modulation of optical signals},
  author={Memeo, Roberto and Crespi, Andrea and Osellame, Roberto},
  journal={Optica},
  volume={11},
  number={2},
  pages={178--183},
  year={2024},
  publisher={Optica Publishing Group}
}

@article{calvarese2022strategies,
  title={Strategies for improved temporal response of glass-based optical switches},
  author={Calvarese, Matteo and Pai{\`e}, Petra and Ceccarelli, Francesco and Sala, Federico and Bassi, Andrea and Osellame, Roberto and Bragheri, Francesca},
  journal={Scientific Reports},
  volume={12},
  number={1},
  pages={239},
  year={2022},
  publisher={Nature Publishing Group UK London}
}

@article{tian2024piezoelectric,
  title={Piezoelectric actuation for integrated photonics},
  author={Tian, Hao and Liu, Junqiu and Kippenberg, Tobias J and Bhave, Sunil},
  journal={arXiv preprint arXiv:2405.08836},
  year={2024}
}

@article{maring2024versatile,
  title={A versatile single-photon-based quantum computing platform},
  author={Maring, Nicolas and Fyrillas, Andreas and Pont, Mathias and Ivanov, Edouard and Stepanov, Petr and Margaria, Nico and Hease, William and Pishchagin, Anton and Lema{\^\i}tre, Aristide and Sagnes, Isabelle and others},
  journal={Nature Photonics},
  volume={18},
  number={6},
  pages={603--609},
  year={2024},
  publisher={Nature Publishing Group UK London}
}

@article{wang2017high,
  title={High-efficiency multiphoton boson sampling},
  author={Wang, Hui and He, Yu and Li, Yu-Huai and Su, Zu-En and Li, Bo and Huang, He-Liang and Ding, Xing and Chen, Ming-Cheng and Liu, Chang and Qin, Jian and others},
  journal={Nature Photonics},
  volume={11},
  number={6},
  pages={361--365},
  year={2017},
  publisher={Nature Publishing Group UK London}
}

@article{zhong201812,
  title={12-photon entanglement and scalable scattershot boson sampling with optimal entangled-photon pairs from parametric down-conversion},
  author={Zhong, Han-Sen and Li, Yuan and Li, Wei and Peng, Li-Chao and Su, Zu-En and Hu, Yi and He, Yu-Ming and Ding, Xing and Zhang, Weijun and Li, Hao and others},
  journal={Physical review letters},
  volume={121},
  number={25},
  pages={250505},
  year={2018},
  publisher={APS}
}

@article{sulimany2024quantum,
  title={Quantum-secure multiparty deep learning},
  author={Sulimany, Kfir and Vadlamani, Sri Krishna and Hamerly, Ryan and Iyengar, Prahlad and Englund, Dirk},
  journal={arXiv preprint arXiv:2408.05629},
  year={2024}
}

@article{hong1987measurement,
  title={Measurement of subpicosecond time intervals between two photons by interference},
  author={Hong, Chong-Ki and Ou, Zhe-Yu and Mandel, Leonard},
  journal={Physical review letters},
  volume={59},
  number={18},
  pages={2044},
  year={1987},
  publisher={APS}
}

@article{PRXQuantum.3.010313,
  title = {Connecting Ansatz Expressibility to Gradient Magnitudes and Barren Plateaus},
  author = {Holmes, Zo\"e and Sharma, Kunal and Cerezo, M. and Coles, Patrick J.},
  journal = {PRX Quantum},
  volume = {3},
  issue = {1},
  pages = {010313},
  numpages = {23},
  year = {2022},
  month = jan,
  publisher = {American Physical Society},
  doi = {10.1103/PRXQuantum.3.010313},
  url = {https://link.aps.org/doi/10.1103/PRXQuantum.3.010313}
}

@misc{xiong2023,
      title={On fundamental aspects of quantum extreme learning machines}, 
      author={Weijie Xiong and Giorgio Facelli and Mehrad Sahebi and Owen Agnel and Thiparat Chotibut and Supanut Thanasilp and Zoë Holmes},
      year={2023},
      eprint={2312.15124},
      archivePrefix={arXiv},
      primaryClass={quant-ph},
      url={https://arxiv.org/abs/2312.15124}, 
}

@misc{mhiri2024,
      title={Constrained and Vanishing Expressivity of Quantum Fourier Models}, 
      author={Hela Mhiri and Leo Monbroussou and Mario Herrero-Gonzalez and Slimane Thabet and Elham Kashefi and Jonas Landman},
      year={2024},
      eprint={2403.09417},
      archivePrefix={arXiv},
      primaryClass={quant-ph},
      url={https://arxiv.org/abs/2403.09417}, 
}

@article{Larocca_2023,
   title={Theory of overparametrization in quantum neural networks},
   volume={3},
   ISSN={2662-8457},
   url={http://dx.doi.org/10.1038/s43588-023-00467-6},
   DOI={10.1038/s43588-023-00467-6},
   number={6},
   journal={Nature Computational Science},
   publisher={Springer Science and Business Media LLC},
   author={Larocca, Martín and Ju, Nathan and García-Martín, Diego and Coles, Patrick J. and Cerezo, Marco},
   year={2023},
   month=jun, pages={542–551} }

@article{Larocca_2022,
   title={Diagnosing Barren Plateaus with Tools from Quantum Optimal Control},
   volume={6},
   ISSN={2521-327X},
   url={http://dx.doi.org/10.22331/q-2022-09-29-824},
   DOI={10.22331/q-2022-09-29-824},
   journal={Quantum},
   publisher={Verein zur Forderung des Open Access Publizierens in den Quantenwissenschaften},
   author={Larocca, Martin and Czarnik, Piotr and Sharma, Kunal and Muraleedharan, Gopikrishnan and Coles, Patrick J. and Cerezo, M.},
   year={2022},
   month=sep, pages={824} 
}

@misc{Ragone2023,
      title={A Unified Theory of Barren Plateaus for Deep Parametrized Quantum Circuits}, 
      author={Michael Ragone and Bojko N. Bakalov and Frédéric Sauvage and Alexander F. Kemper and Carlos Ortiz Marrero and Martin Larocca and M. Cerezo},
      year={2023},
      eprint={2309.09342},
      archivePrefix={arXiv},
      primaryClass={quant-ph}
}

@misc{Fontana2023,
      title={The Adjoint Is All You Need: Characterizing Barren Plateaus in Quantum Ans\"atze}, 
      author={Enrico Fontana and Dylan Herman and Shouvanik Chakrabarti and Niraj Kumar and Romina Yalovetzky and Jamie Heredge and Shree Hari Sureshbabu and Marco Pistoia},
      year={2023},
      eprint={2309.07902},
      archivePrefix={arXiv},
      primaryClass={quant-ph}
}

@misc{monbroussou2023,
      title={Trainability and Expressivity of Hamming-Weight Preserving Quantum Circuits for Machine Learning}, 
      author={Léo Monbroussou and Jonas Landman and Alex B. Grilo and Romain Kukla and Elham Kashefi},
      year={2023},
      eprint={2309.15547},
      archivePrefix={arXiv},
      primaryClass={quant-ph},
      url={https://arxiv.org/abs/2309.15547}, 
}

@article{McClean2018,
	doi = {10.1038/s41467-018-07090-4},
  
	url = {https://doi.org/10.1038%2Fs41467-018-07090-4},
  
	year = 2018,
	month = nov,
  
	publisher = {Springer Science and Business Media {LLC}
},
	volume = {9},
  
	number = {1},
  
	author = {Jarrod R. McClean and Sergio Boixo and Vadim N. Smelyanskiy and Ryan Babbush and Hartmut Neven},
  
	title = {Barren plateaus in quantum neural network training landscapes},
  
	journal = {Nature Communications}
}

@misc{diaz2023,
      title={Showcasing a Barren Plateau Theory Beyond the Dynamical Lie Algebra}, 
      author={N. L. Diaz and Diego García-Martín and Sujay Kazi and Martin Larocca and M. Cerezo},
      year={2023},
      eprint={2310.11505},
      archivePrefix={arXiv},
      primaryClass={quant-ph},
      url={https://arxiv.org/abs/2310.11505}, 
}

@misc{cerezo2024does,
      title={Does provable absence of barren plateaus imply classical simulability? Or, why we need to rethink variational quantum computing}, 
      author={M. Cerezo and Martin Larocca and Diego García-Martín and N. L. Diaz and Paolo Braccia and Enrico Fontana and Manuel S. Rudolph and Pablo Bermejo and Aroosa Ijaz and Supanut Thanasilp and Eric R. Anschuetz and Zoë Holmes},
      year={2024},
      eprint={2312.09121},
      archivePrefix={arXiv},
      primaryClass={quant-ph}
}

@misc{goh2023,
      title={Lie-algebraic classical simulations for variational quantum computing}, 
      author={Matthew L. Goh and Martin Larocca and Lukasz Cincio and M. Cerezo and Frédéric Sauvage},
      year={2023},
      eprint={2308.01432},
      archivePrefix={arXiv},
      primaryClass={quant-ph},
      url={https://arxiv.org/abs/2308.01432}, 
}

@misc{coyle2024trainingefficientdensityquantummachine,
      title={Training-efficient density quantum machine learning}, 
      author={Brian Coyle and El Amine Cherrat and Nishant Jain and Natansh Mathur and Snehal Raj and Skander Kazdaghli and Iordanis Kerenidis},
      year={2024},
      eprint={2405.20237},
      archivePrefix={arXiv},
      primaryClass={quant-ph},
      url={https://arxiv.org/abs/2405.20237}, 
}

@article{Cong_2019,
   title={Quantum convolutional neural networks},
   volume={15},
   ISSN={1745-2481},
   url={http://dx.doi.org/10.1038/s41567-019-0648-8},
   DOI={10.1038/s41567-019-0648-8},
   number={12},
   journal={Nature Physics},
   publisher={Springer Science and Business Media LLC},
   author={Cong, Iris and Choi, Soonwon and Lukin, Mikhail D.},
   year={2019},
   month=aug, pages={1273–1278} }

@article{Bamber-HowMany-1985,
	title = {How many parameters can a model have and still be testable?},
	volume = {29},
	url = {https://www.sciencedirect.com/science/article/pii/0022249685900057},
	doi = {https://doi.org/10.1016/0022-2496(85)90005-7},
	pages = {443--473},
	number = {4},
	journaltitle = {Journal of Mathematical Psychology},
	author = {Bamber, Donald and Van Santen, Jan {PH}},
	date = {1985},
}

@article{Reck-ExperimentalRealization-1994,
  title = {Experimental realization of any discrete unitary operator},
  author = {Reck, Michael and Zeilinger, Anton and Bernstein, Herbert J. and Bertani, Philip},
  journal = {Phys. Rev. Lett.},
  volume = {73},
  issue = {1},
  pages = {58--61},
  numpages = {0},
  year = {1994},
  month = Jul,
  publisher = {American Physical Society},
  doi = {10.1103/PhysRevLett.73.58},
  url = {https://link.aps.org/doi/10.1103/PhysRevLett.73.58}
}

@article{Parellada-NogoTheorems-2023,
	title = {No-go theorems for photon state transformations in quantum linear optics},
	volume = {54},
	issn = {2211-3797},
	url = {https://www.sciencedirect.com/science/article/pii/S2211379723009014},
	doi = {10.1016/j.rinp.2023.107108},
	pages = {107108},
	journaltitle = {Results in Physics},
	shortjournal = {Results in Physics},
	author = {Parellada, Pablo V. and Gimeno i Garcia, Vicent and Moyano-Fernández, Julio José and Garcia-Escartin, Juan Carlos},
	urldate = {2023-12-01},
	date = {2023-11-01},
}

@inproceedings{aaronson_computational_2011,
	address = {San Jose California USA},
	title = {The computational complexity of linear optics},
	isbn = {978-1-4503-0691-1},
	url = {https://dl.acm.org/doi/10.1145/1993636.1993682},
	doi = {10.1145/1993636.1993682},
	language = {en},
	urldate = {2024-07-18},
	booktitle = {Proceedings of the forty-third annual {ACM} symposium on {Theory} of computing},
	publisher = {ACM},
	author = {Aaronson, Scott and Arkhipov, Alex},
	month = jun,
	year = {2011},
	pages = {333--342},
}

@book{Lee-IntroductionSmooth-2012,
	location = {New York, {NY}},
	title = {Introduction to Smooth Manifolds},
	volume = {218},
	isbn = {978-1-4419-9981-8},
	url = {https://link.springer.com/10.1007/978-1-4419-9982-5},
	series = {Graduate Texts in Mathematics},
	publisher = {Springer},
	author = {Lee, John M.},
	urldate = {2023-07-04},
	date = {2012},
	langid = {english},
	doi = {10.1007/978-1-4419-9982-5},
}

@incollection{PyTorch,
title = {PyTorch: An Imperative Style, High-Performance Deep Learning Library},
author = {Paszke, Adam and Gross, Sam and Massa, Francisco and Lerer, Adam and Bradbury, James and Chanan, Gregory and Killeen, Trevor and Lin, Zeming and Gimelshein, Natalia and Antiga, Luca and Desmaison, Alban and Kopf, Andreas and Yang, Edward and DeVito, Zachary and Raison, Martin and Tejani, Alykhan and Chilamkurthy, Sasank and Steiner, Benoit and Fang, Lu and Bai, Junjie and Chintala, Soumith},
booktitle = {Advances in Neural Information Processing Systems 32},
pages = {8024--8035},
year = {2019},
publisher = {Curran Associates, Inc.},
url = {http://papers.neurips.cc/paper/9015-pytorch-an-imperative-style-high-performance-deep-learning-library.pdf}
}

@article{marcus_permanents_1965,
  title   = {Permanents},
  volume  = {72},
  issn    = {00029890},
  url     = {https://www.jstor.org/stable/2313846?origin=crossref},
  doi     = {10.2307/2313846},
  number  = {6},
  urldate = {2023-08-16},
  journal = {The American Mathematical Monthly},
  author  = {Marcus, Marvin and Minc, Henryk},
  month   = jun,
  year    = {1965},
  pages   = {577}
}

\newpage

\begin{appendix}
    \section{Details about the dimension}\label{chap:Details_Dimension}

In the main text we refer to the \textit{dimension} of various subsets of unitary matrices, or of density matrices. Let us precise here what we mean by the word dimension.

To begin, consider the space $SU(d)$ of $d \times d$ unitary matrices. It is a \textit{manifold of dimension} $d^2 -1$. More specifically, when viewing the space of arbitrary complex $d \times d$ matrices as $\mathbb{R}^{2d^2}$ (through concatenation of its columns and splitting each complex scalar into real and imaginary parts), the subset $SU(d)$ is a \textit{submanifold of dimension $d^2-1$} in this ambient Euclidean space of dimension $2d^2$. This can be understood as meaning that the subset $SU(d)$ is a "smooth hypersurface of dimension $d^2 -1$ inside $\mathbb{R}^{2d^2}$", and that in particular, around each unitary "point" $U \in SU(d)$ on this hypersurface there are $\dim(SU(d))=d^2 - 1$ independent directions in which one can locally move while still remaining on the hypersurface. (For the precise meaning, we refer the reader to the definition of \textit{embedded submanifolds}, e.g. in \cite{Lee-IntroductionSmooth-2012}.) Now, given a subset of unitary matrices $\mathcal{W} \subseteq SU(d)$, there are two situations we will consider that will let us talk about the notion of $\dim(\mathcal{W})$.

First, if $\mathcal{W}$ is a (closed) subgroup of $SU(d)$, then it is known \cite[Theorem 20.12]{Lee-IntroductionSmooth-2012} that $\mathcal{W}$ is a legitimate \textit{submanifold} inside $SU(d)$, and hence the number $\dim(\mathcal{W})$ is meaningful.

Second, if $\mathcal{W}$ is the result of an \textit{analytic} parametrization, \ie $\mathcal{W} := \{ U(\bm{\theta}) \,|\, \bm{\theta} \in \Theta\subseteq\mathbb{R}^p \}$ with a parametrization map $U:\Theta\to SU(d)$ that is smooth and analytic, then even though $\mathcal{W}$ will technically not be a submanifold in general, it still holds that "\textit{locally} around \textit{almost every} point of $\mathcal{W}$, it is a submanifold, of dimension $d_U$". Here, the notions of "\textit{locally}" and "\textit{almost every}" correspond respectively to the topology and Lebesgue measure of the parameter space $\Theta$ (not of the ambient space $SU(d)$),
and the number $d_U$ is given by the maximum over $\bm{\theta} \in \Theta$ of the rank of the Jacobian matrix of the parametrization map $U:\Theta\to SU(d)$ calculated at point $\bm{\theta}$. The reason behind this claim is the combination of the fact that images of smooth maps of constant Jacobian rank $r$ are "locally" submanifolds of dimension $r$ \cite[Sections 3 and 4]{Lee-IntroductionSmooth-2012}, and that the Jacobian rank of an analytic map is constant "almost-everywhere" \cite[Prop. B.4]{Bamber-HowMany-1985}.
In that situation, we let $\dim(\mathcal{W}):=d_U$, and it should be thought as quantifying the "local dimension" of the space $\mathcal{W}$. Note that (simply by the fact that a matrix's rank is no greater than its height or its width), we have $\dim(\mathcal{W}) \leq \min(\dim(\Theta),\dim(SU(d))$, which encapsulates two intuitive but important facts, which we state again in words:
\begin{enumerate}
    \item "The dimension can only decrease or remain constant after going through a parametrization --- and it remains constant iff the parametrization is injective";
    \item "The dimension of a subset is always smaller or equal to the dimension of a manifold in which it is included".
\end{enumerate}

If $\mathcal{W}$ is a subset of $d \times d$ density matrices (instead of unitary matrices), the whole discussion of the above paragraph holds identically, as in fact it holds for any smooth manifold $M$ is place of $SU(d)$.

Now, consider the important following situation of a parametrized unitary, \ie a map $U:\Theta\to SU(d)$, which then acts onto a fixed initial density matrix $\rho_{\mathrm{in}}$ to produce a parametrized density matrix: $\rho_{\mathrm{out}}(\bm{\theta}):=U(\bm{\theta}) \rho_{\mathrm{in}} U(\bm{\theta})^\dagger$. Denote by $\mathrm{Dens}(d)$ the set of $d \times d$ density matrices. Since the resulting map $\rho_{\mathrm{out}}:\Theta\to \mathrm{Dens}(d)$ is the composition of the parametrized unitary map with the initial-state evaluation map $V\mapsto V \rho_{\mathrm{in}} V^\dagger$, the parametrized state can be seen as having as its parameter space the image of the first map, $U(\Theta)$. And therefore, from the two numbered facts above, another intuitive but important fact follows here: the dimension of the space of output density matrices is no greater than the dimension of the parametrized unitaries, \ie 
$$\dim( \{\rho_\mathrm{out}(\bm{\theta}) \,|\, \bm{\theta} \in \Theta \} ) \leq \dim( \{ U(\bm{\theta}) \,|\, \bm{\theta} \in \Theta \} )\,.$$

    \section{State Injection and Purity Evolution}\label{chap:Proof_Purity_Evolution}

In this Section, we present the proof of the results on the density operator purity evolution from \autoref{subsec:Purity}.

\subsection{Proof of \autoref{thm:PurityEvolution}}\label{subchap:ProofTheoremPurityEvolution}

We recall the expression of \autoref{thm:PurityEvolution}:

\PurityEvolution*

\begin{proof}
    We consider an initial pure state $\ketbra{\psi_0}{\psi_0}$, which is Fock state of $n$ photons over $m$ modes. We call $\rho$ the state after the state injection and $\gamma(\rho)$ its purity. Then,
    \begin{equation}
        \gamma(\rho) = \Tr[\rho^2] = \Tr[(\sum_{i \in [n]} \Pr[i] \ketbra{\psi^i}{\psi^i})^2],
    \end{equation}
    with $\ket{\psi^i}$ the pure state post-injection corresponding to the measurement outcome $p=i$. 
    \begin{equation}
        \begin{split}
            \gamma(\rho) &= \Tr[\sum_{i \in [n]} \Pr[i]^2 (\ketbra{\psi^i}{\psi^i})^2 + \sum_{i,j \in [n] \mid i \neq j} \Pr[i] \Pr[j] \ketbra{\psi^i}{\psi^i} \ketbra{\psi^j}{\psi^j}] \\
            &= \sum_{i \in [n]}\Pr[i]^2\Tr[(\ketbra{\psi^i}{\psi^i})^2] + \sum_{i,j \in [n] \mid i \neq j} \Pr[i] \Pr[j] \Tr[\ketbra{\psi^i}{\psi^i} \ketbra{\psi^j}{\psi^j}] \\
            &= \sum_{i \in [n]}\Pr[i]^2 + \sum_{i,j \in [n] \mid i \neq j} \Pr[i] \Pr[j] \Tr[\ketbra{\psi^i}{\psi^i} \ketbra{\psi^j}{\psi^j}],
        \end{split}
    \end{equation}
    as for any $i$, $\ketbra{\psi^i}{\psi^i}$ is a pure state. Considering the specific injection function, we have that for all $i \in [n]$
    \begin{equation}
        \ketbra{\psi^i}{\psi^i} = \Pi_i \ketbra{\psi_0}{\psi_0} \Pi_i^{\dagger},
    \end{equation}
    with $\Pi_i$ the projector of the state over all the Fock basis where there are $i$ photons on the measured modes. By definition, for all $i,j\neq i \in [n], \Pi_i^\dagger \Pi_j = 0$.
    Therefore, if follows that
    \begin{equation}
        \gamma(\rho) = \sum_{i \in [n]}\Pr[i]^2.
    \end{equation}
\end{proof}

\subsection{Proof of \autoref{cor:PurityEvolutionBound}}\label{subchap:ProofCorWorstScenario}

We recall the expression of \autoref{cor:PurityEvolutionBound}:

\PurityEvolutionBound*

\begin{proof}
    We consider a circuit with $L$ state injection layers separated by linear optical circuits. We call $\ketbra{\psi_l}{\psi_l}$ the state after the $l$ state injection layer with $l \in [L]$. Notice that we have: 
    \begin{equation}\label{eq:PurityInductiveRelation}
        \ketbra{\psi_l}{\psi_l} = \sum_{i=o}^n U(\bm{\theta}_{l-1}) \ketbra{\psi^i_{l-1}}{\psi^i_{l-1}} U^\dagger( \bm{\theta}_{l-1}) =  \sum_{i=o}^n U(\bm{\theta}_{l-1}) \Pi_i \ketbra{\psi_{l-1}}{\psi_{l-1}} \Pi_i^\dagger U^\dagger( \bm{\theta}_{l-1}) \, \textrm{,}
    \end{equation}
    with $U(\bm{\theta}_{l-1})$ the unitary corresponding to the linear optical circuit that separated the state injection layers and $\Pi_i$ the projector of the state onto all the Fock basis where there are $i$ photons on the measured modes for the considered state injection. By definition, for all $i,j\neq i \in [n], \Pi_i^\dagger \Pi_j = 0$.
    We can thus state that
    \begin{equation}
        \gamma(\rho) = \Tr[\rho^2] = \Tr[(\sum^n_{i=0} \Pr[L,i] \ketbra{\psi^i_L}{\psi^i_L})^2],
    \end{equation}
    with $\Pr[l,i]$ the probability of measuring $i$ photons in the measurement of the $l^{\textrm{th}}$ state injection layer. In a similar way as in the proof of \autoref{thm:PurityEvolution} (see \autoref{subchap:ProofTheoremPurityEvolution}), we can show that
    \begin{equation}
        \gamma(\rho) = \sum^n_{i=0} \Pr[L,i]^2 \Tr[(\ketbra{\psi^i_L}{\psi^i_L})^2].
    \end{equation}
    In this case, $\ketbra{\psi^i_L}{\psi^i_L}$ is not a pure state. Using the inductive relation from \autoref{eq:PurityInductiveRelation}, we conclude that
    \begin{equation}\label{eq:gammaRhoFromSum}
        \gamma(\rho) = \prod_{l=1}^L (\sum^n_{i=0} \Pr[l,i]^2).
    \end{equation}
    Notice that $\sum_{i=0}^n \Pr[l,i] = 1$ for any $l \in [L]$. Therefore, the sum of the square of the probability is lower bounded by
    \begin{equation}\label{eq:sumPrSquaredLB}
        \sum_{i=0}^n \Pr[l,i]^2 \geq \sum_{i=0}^n (\frac{1}{n+1})^2 = \frac{1}{n+1}.
    \end{equation}
    Finally, plugging \autoref{eq:sumPrSquaredLB} into \autoref{eq:gammaRhoFromSum}, we have
    \begin{equation}
        \gamma(\rho) \leq \prod_{l=1}^L \frac{1}{n+1} = \frac{1}{(n+1)^L} \, \textrm{.}
    \end{equation}
\end{proof}

\subsection{Proof of \autoref{cor:PurityNoCollision}}\label{subchap:ProofCorNoCollision}

We recall the expression of \autoref{cor:PurityNoCollision}:

\PurityNoCollision*

\begin{proof}
    We recall the expression of the Boson Birthday Bound introduced in \cite{aaronson_computational_2011}:

    \begin{restatable}{thm}{BosonBirthdayBound}[{Boson Birthday Bound, adapted from \cite[Theorem 72]{aaronson_computational_2011}}]\label{thm:BosonBirthdayBound}
        Recalling that $\mathcal{H}_{m,m}$ is the Haar measure over $m \times m$ unitary matrices,
        \begin{equation}
            \mathbb{E}_{U \in \mathcal{H}_{m,m}}[\Pr[S \in B_{m,n}]] < \frac{2n^2}{m} \, \textrm{.}
        \end{equation}
        With $S$ a Fock state, and $B_{m,n}$ the set of $m$-modes and $n$-particle states where more than one photon can be per mode.  
    \end{restatable}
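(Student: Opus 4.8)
The plan is to bound the expected collision probability by a single first-moment (Markov) estimate on the number of colliding photon pairs, which reduces the whole claim to one second moment of the Haar measure. Following \cite{aaronson_computational_2011}, I take the input to be the collision-free Fock state $\ket{1_n}$ with one photon in each of the first $n$ modes, and I write $S=(s_1,\dots,s_m)$ for the random output occupancy obtained by applying $\varphi(U)$ and measuring in the Fock basis. The first step is to observe that a collision occurs if and only if the nonnegative integer $N:=\sum_{j=1}^m\binom{s_j}{2}$ — the number of pairs of photons occupying a common mode — satisfies $N\geq 1$. Hence Markov's inequality gives $\Pr[S\in B_{m,n}]=\Pr[N\geq 1]\leq\mathbb{E}_S[N]$, and it suffices to bound $\mathbb{E}_U\mathbb{E}_S[N]$.

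Next I would compute $\mathbb{E}_S[\binom{s_j}{2}]=\tfrac12\langle b_j^\dagger b_j^\dagger b_j b_j\rangle$ in the Heisenberg picture, with output mode operators $b_j=\sum_i U_{ji}a_i$. Expanding the quartic and using that the input places exactly one photon in each of the modes $1,\dots,n$ — so that only the normal-ordered terms with $\{i_1,i_2\}=\{i_3,i_4\}$ and $i_1\neq i_2$ survive — gives $\langle b_j^\dagger b_j^\dagger b_j b_j\rangle=2\sum_{i_1\neq i_2}|U_{ji_1}|^2|U_{ji_2}|^2$, the factor $2$ being exactly the bosonic Hong--Ou--Mandel bunching enhancement relative to distinguishable particles; hence $\mathbb{E}_S[\binom{s_j}{2}]=\sum_{i_1\neq i_2}|U_{ji_1}|^2|U_{ji_2}|^2$. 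Summing over $j$ and taking the Haar average then requires only the single moment $\mathbb{E}_U[|U_{ja}|^2|U_{jb}|^2]=\frac{1}{m(m+1)}$ for $a\neq b$, which follows since a row of a Haar-random unitary is a uniform unit vector in $\mathbb{C}^m$ whose squared-modulus entries are flat-Dirichlet distributed. Combining,
\[
\mathbb{E}_U\mathbb{E}_S[N]=\sum_{j=1}^m\sum_{i_1\neq i_2\in[n]}\mathbb{E}_U\big[|U_{ji_1}|^2|U_{ji_2}|^2\big]=m\,n(n-1)\,\frac{1}{m(m+1)}=\frac{n(n-1)}{m+1}<\frac{2n^2}{m}.
\]
Chaining with the Markov step yields $\mathbb{E}_U[\Pr[S\in B_{m,n}]]\leq\frac{n(n-1)}{m+1}<\frac{2n^2}{m}$, with considerable room to spare; the stated constant $2$ comfortably absorbs the bunching factor and all slack.

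The main obstacle is the bosonic operator bookkeeping in the second step: one must expand $b_j^\dagger b_j^\dagger b_j b_j$ correctly and identify precisely which terms survive against the single-photon-per-mode input, so as to pin down the bunching factor of $2$ rather than a spurious constant. The Haar moment and the Markov reduction are by contrast routine. I would also remark that this first-moment argument reproduces, and in fact slightly sharpens, Aaronson--Arkhipov's Theorem 72, so one may alternatively invoke that result directly and simply verify that the constant $2$ is respected.
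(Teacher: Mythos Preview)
Your argument is correct: the Markov reduction via the pair-count $N=\sum_j\binom{s_j}{2}$, the Heisenberg-picture computation of $\langle b_j^{\dagger 2}b_j^2\rangle$ against the collision-free input (where the bunching factor of $2$ comes out exactly as you say, from the two Wick pairings $(i_1,i_2)=(i_3,i_4)$ and $(i_1,i_2)=(i_4,i_3)$), and the single Haar second moment $\mathbb{E}_U[|U_{ja}|^2|U_{jb}|^2]=\frac{1}{m(m+1)}$ all check out, and the final bound $\frac{n(n-1)}{m+1}<\frac{2n^2}{m}$ is immediate.

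As for comparison with the paper: the paper does \emph{not} supply its own proof of this statement. It is stated inside the proof of \autoref{cor:PurityNoCollision} purely as a quoted result, labelled ``adapted from \cite[Theorem~72]{aaronson_computational_2011}'', and then used as a black box. So there is nothing to compare against beyond the original Aaronson--Arkhipov argument, which your first-moment computation essentially reproduces (and, as you note, slightly sharpens in the constant). Your closing remark that one may alternatively just invoke their Theorem~72 directly is exactly what the paper does.
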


    We consider a circuit with $L$ state injection layers separated by linear optical circuits. We consider the case where each unitary matrix corresponding to the Linear Optic layers are Haar random matrices. We call $\ketbra{\psi_l}{\psi_l}$ the state after the $l$ state injection layer with $l \in [L]$. As in the proof of \autoref{cor:PurityEvolutionBound}, we consider the same inductive relation for the purity of the states within the circuit given by \autoref{eq:PurityInductiveRelation}. As in this previous proof, we can use \autoref{eq:gammaRhoFromSum} that we recall here:
    \begin{equation*}
        \gamma(\rho) = \prod_{l=1}^L (\sum^n_{i=0} \Pr[l,i]^2).
    \end{equation*}

    Using \autoref{thm:BosonBirthdayBound}, have that for any $l \in [L]$:
    \begin{equation}
        \mathbb{E}_{U \in \mathcal{H}_{m,m}}[\Pr[l,0] + \Pr[l,1]] \geq \frac{m - 2n^2}{m} \, \textrm{.}
    \end{equation}
    Therefore, we have for any $l \in [L]$:
    \begin{equation}
        \mathbb{E}_{U \in \mathcal{H}_{m,m}}[\sum_{i=0}^n \Pr[l,i]^2] \geq \mathbb{E}_{U \in \mathcal{H}_{m,m}}[\sum_{i=0}^1 \Pr[l,i]^2] \geq 2 (\frac{m - 2n^2}{2m})^2 \, \textrm{.} 
    \end{equation}

    We can conclude:
    \begin{equation}
        \mathbb{E}_{U \in \mathcal{H}_{m,m}}[\gamma(\rho) = \prod_{l=1}^L (\sum^n_{i=0} \Pr[l,i]^2]) \geq  (\frac{m - 2n^2}{\sqrt{2} m})^{2 L} \, \textrm{.}
    \end{equation}

\end{proof}

    \section{Proof of \autoref{thm:Distinguishability}}\label{chap:ProofTheoremIndiscernability}

We recall the expression of \autoref{thm:Distinguishability}:

\Distinguishability*

\begin{proof}
Let us begin with some notation.
We denote by $\mathrm{Herm}(d)$ the space of $d \times d$ Hermitian matrices, and given two matrices $A,B \in \mathrm{Herm}(d)$, we introduce the notation $\langle A, B \rangle := \Tr[A B]$ for the Hilbert-Schmidt inner-product in this real vector space. Denote also by $\mathrm{Dens}(d)$ the subset of density matrices.
We recall the definition of the Schatten $p$-norms (as they will occur in the proof, for $p=$ $1$, $2$ and $\infty$): given a matrix ${A \in \mathbb{C}^{d \times d}}$, ${\norm{A}_{p}:=\norm{\bm{\sigma}}_{p}:=(\sigma_1^p + \cdots + \sigma_d^p)^{1/p}}$, where $\bm{\sigma} \in \mathbb{R}^{d}$ denotes the vector of the singular values of $A$. In the case where $A \in \mathrm{Herm}(d)$, denoting by $\bm{\lambda} \in \mathbb{R}^{d}$ the vector of eigenvalues of $A$, it holds that
$\norm{A}_{1} = |\lambda_1| + \cdots + |\lambda_d|$, 
$\norm{A}_{2} = (\lambda_1^2 + \cdots + \lambda_d^2)^{1/2} = \sqrt{\langle A, A \rangle}$, and
$\norm{A}_{\infty} = \max_{i=1,\dots,d} |\lambda_i|$.

Note that given $A \in \mathrm{Herm}(d)$, its purity $\Tr[A^2]$ is by definition just $\norm{A}_2^2$.
Given $\rho,\sigma \in \mathrm{Dens}(d)$ such that
\begin{equation}\label{eq:ProofTheoremIndiscernability-assumption}
\norm{\rho}_2^2 \leq \gamma\text{\ and\ }\norm{\sigma}_2^2 \leq \gamma\,, 
\end{equation}
we have:

\begin{align}
&\mathcal{D}(\rho,\sigma)\nonumber\\[5pt]
&= \max_{\norm{O}_{\infty}=1} \big| \langle \rho - \sigma, O \rangle\big| && \left(\text{\autoref{eq:def-distinguishability-of-pair}}\right)\label{eq:ProofTheoremIndiscernability-eq-distinguish}\\[5pt]
&= \norm{\rho - \sigma}_1 && \left(\text{trace distance identity, explained below}\right) \label{eq:ProofTheoremIndiscernability-eqtoexplain1}\\[5pt]
&\leq \sqrt{d}\, \norm{\rho - \sigma}_2 && \left(\, \norm{\cdot}_1 \leq \sqrt{d}\norm{\cdot}_2 \,\right)\\[5pt]
&\leq \sqrt{d}\, \big( \norm{\rho - \mmstate{d}}_2 + \norm{\sigma - \mmstate{d}}_2 \big) && \left(\text{triangle inequality for $\norm{\cdot}_2$}\right)\\[5pt]
&= \sqrt{d}\,\left( \sqrt{ \norm{\rho - \bm{0}}_2^2 - \norm{\mmstate{d} - \bm{0}}_2^2 }  +  \sqrt{ \norm{\sigma - \bm{0}}_2^2 - \norm{\mmstate{d} - \bm{0}}_2^2 }  \right) && \left(\text{Pythagorean theorem, explained below}\right) \label{eq:ProofTheoremIndiscernability-eqtoexplain2}\\[5pt]
&= \sqrt{d}\,\left( \sqrt{ \norm{\rho}_2^2 - 1/d}  +   \sqrt{ \norm{\sigma}_2^2 - 1/d}  \right) && \left(\, \norm{\mmstate{d}}_2^2 = 1/d \,\right)\\[5pt]
&\leq \sqrt{d}\,\left( \sqrt{ \gamma - 1/d}  +   \sqrt{ \gamma - 1/d}  \right) && \left(\text{By assumption \autoref{eq:ProofTheoremIndiscernability-assumption}}\right)\\[5pt]
&= 2\sqrt{d}\,\,\sqrt{ \gamma - 1/d}\,,
\end{align}
which gives the desired result. We end by giving more explanations for \autoref{eq:ProofTheoremIndiscernability-eqtoexplain1,eq:ProofTheoremIndiscernability-eqtoexplain2}.

\autoref{eq:ProofTheoremIndiscernability-eqtoexplain1} is a standard inequality --- that gives an operational meaning of the \textit{trace distance} of two states $\rho,\sigma$ ($\frac{1}{2}\norm{\rho - \sigma}_1$) in terms of their distinguishability ($\mathcal{D}(\rho,\sigma)$). It holds because of the following two observations. First, by the $(1,\infty)$-Hölder inequality, one has that for all $O\in\mathrm{Herm}(d)$:
\begin{align}
    \big| \langle \rho - \sigma, O \rangle\big| &\leq \norm{\rho - \sigma}_1 \norm{O}_\infty\,,
    \intertext{implying (\autoref{eq:ProofTheoremIndiscernability-eq-distinguish}) that}
    \mathcal{D}(\rho,\sigma) &\leq \norm{\rho - \sigma}_1\,.
\end{align}
Second, by choosing the specific observable $O\in\mathrm{Herm}(d)$ given by $O:=\sum_{i=1}^d \mathrm{sign}(\lambda_i) \ketbra{e_i}$, with ${\rho - \sigma = \sum_{i=1}^d \lambda_i \ketbra{e_i}}$ the eigendecomposition of the Hermitian matrix $\rho - \sigma$, one has $\norm{O}_\infty = 1$ and
\begin{align}
    \langle \rho - \sigma, O \rangle &= \norm{\rho - \sigma}_1\,,
    \intertext{implying (\autoref{eq:ProofTheoremIndiscernability-eq-distinguish}) that}
    \mathcal{D}(\rho,\sigma) &\geq \norm{\rho - \sigma}_1\,.
\end{align}

In \autoref{eq:ProofTheoremIndiscernability-eqtoexplain2}, we apply the Pythagorean theorem (in the real vector space $\mathrm{Herm}(d)$ with the geometry given by the Hilbert-Schmidt inner-product) to two right triangles. These are respectively the triangles made of the vertices $(\bm{0}, \mmstate{d}, \rho)$ and $(\bm{0}, \mmstate{d}, \sigma)$. They are both right triangles with right angle located at point $\mmstate{d}$, since 
${\mmstate{d} \in (\mathrm{Dens}(d) - \mmstate{d})^\perp}$ --- that is, for any $S \in \mathrm{Dens}(d)$ one has $ \langle \mmstate{d},\ S - \mmstate{d} \rangle = 0$, because:
\begin{equation}
    \langle \mmstate{d},\ S - \mmstate{d} \rangle = \langle \mmstate{d}, S \rangle - \langle \mmstate{d}, \mmstate{d} \rangle := \Tr[(\mmstate{d}) S] - \Tr[(\mmstate{d})^2] = \Tr[S]/d - 1/d = 0\,.
\end{equation}
\end{proof}
\end{appendix}

\end{document}